\theoremstyle{plain}
\theoremstyle{plain}
\newtheorem{theorem}{Theorem}
\theoremstyle{plain}
\theoremstyle{plain}
\newtheorem{lemma}{Lemma}
\theoremstyle{remark}
\theoremstyle{remark}
\theoremstyle{plain}
\theoremstyle{plain}
\theoremstyle{plain}
\providecommand{\customgenericname}{}
\newcommand{\newcustomtheorem}[2]{%
  \newenvironment{#1}[1]
  {%
   \renewcommand\customgenericname{#2}%
   \renewcommand\theinnercustomgeneric{##1}%
   \innercustomgeneric
  }
  {\endinnercustomgeneric}
}
\Crefname{equation}{Eq.\!}{Eqs.\!}
\Crefname{figure}{Fig.\!}{Figs.\!}
\Crefname{tabular}{Tab.\!}{Tabs.\!}
\Crefname{section}{Sec.\!}{Secs.\!}
\begin{document}
 	\pgfplotsset{every axis/.append style={
 			line width=1pt,
 			legend style={font=\large, at={(0.97,0.85)}}},
 	} %
\begin{acronym}

\acro{5G-NR}{5G New Radio}
\acro{3GPP}{3rd Generation Partnership Project}
\acro{AC}{address coding}
\acro{ACF}{autocorrelation function}
\acro{ACR}{autocorrelation receiver}
\acro{ADC}{analog-to-digital converter}
\acrodef{aic}[AIC]{Analog-to-Information Converter}     
\acro{AIC}[AIC]{Akaike information criterion}
\acro{aric}[ARIC]{asymmetric restricted isometry constant}
\acro{arip}[ARIP]{asymmetric restricted isometry property}

\acro{ARQ}{automatic repeat request}
\acro{AUB}{asymptotic union bound}
\acrodef{awgn}[AWGN]{Additive White Gaussian Noise}     
\acro{AWGN}{additive white Gaussian noise}

\acro{APSK}[PSK]{asymmetric PSK} 

\acro{waric}[AWRICs]{asymmetric weak restricted isometry constants}
\acro{warip}[AWRIP]{asymmetric weak restricted isometry property}
\acro{BCH}{Bose, Chaudhuri, and Hocquenghem}        
\acro{BCHC}[BCHSC]{BCH based source coding}
\acro{BEP}{bit error probability}
\acro{BFC}{block fading channel}
\acro{BG}[BG]{Bernoulli-Gaussian}
\acro{BGG}{Bernoulli-Generalized Gaussian}
\acro{BPAM}{binary pulse amplitude modulation}
\acro{BPDN}{Basis Pursuit Denoising}
\acro{BPPM}{binary pulse position modulation}
\acro{BPSK}{binary phase shift keying}
\acro{BPZF}{bandpass zonal filter}
\acro{BSC}{binary symmetric channels}              
\acro{BU}[BU]{Bernoulli-uniform}
\acro{BER}{bit error rate}
\acro{BS}{base station}

\acro{CP}{Cyclic Prefix}
\acrodef{cdf}[CDF]{cumulative distribution function}   
\acro{CDF}{cumulative distribution function}
\acrodef{c.d.f.}[CDF]{cumulative distribution function}
\acro{CCDF}{complementary cumulative distribution function}
\acrodef{ccdf}[CCDF]{complementary CDF}               
\acrodef{c.c.d.f.}[CCDF]{complementary cumulative distribution function}
\acro{CD}{cooperative diversity}

\acro{CDMA}{Code Division Multiple Access}
\acro{ch.f.}{characteristic function}
\acro{CIR}{channel impulse response}
\acro{cosamp}[CoSaMP]{compressive sampling matching pursuit}
\acro{CR}{cognitive radio}
\acro{cs}[CS]{compressed sensing}                   
\acrodef{cscapital}[CS]{Compressed sensing}
\acrodef{CS}[CS]{compressed sensing}
\acro{CSI}{channel state information}

\acro{CCSDS}{consultative committee for space data systems}
\acro{CC}{convolutional coding}

\acro{DAA}{detect and avoid}
\acro{DAB}{digital audio broadcasting}
\acro{DCT}{discrete cosine transform}
\acro{dft}[DFT]{discrete Fourier transform}
\acro{DR}{distortion-rate}
\acro{DS}{direct sequence}
\acro{DS-SS}{direct-sequence spread-spectrum}
\acro{DTR}{differential transmitted-reference}
\acro{DVB-H}{digital video broadcasting\,--\,handheld}
\acro{DVB-T}{digital video broadcasting\,--\,terrestrial}
\acro{DL}{downlink}
\acro{DSSS}{Direct Sequence Spread Spectrum}
\acro{DFT-s-OFDM}{Discrete Fourier Transform-spread-Orthogonal Frequency Division Multiplexing}
\acro{DAS}{distributed antenna system}
\acro{DNA}{Deoxyribonucleic Acid}

\acro{EC}{European Commission}
\acro{EED}[EED]{exact eigenvalues distribution}
\acro{EIRP}{Equivalent Isotropically Radiated Power}
\acro{ELP}{equivalent low-pass}
\acro{eMBB}{Enhanced Mobile Broadband}
\acro{EMF}{electric and magnetic fields}
\acro{EU}{European union}

\acro{FC}[FC]{fusion center}
\acro{FCC}{Federal Communications Commission}
\acro{FEC}{forward error correction}
\acro{FFT}{fast Fourier transform}
\acro{FH}{frequency-hopping}
\acro{FH-SS}{frequency-hopping spread-spectrum}
\acrodef{FS}{Frame synchronization}
\acro{FSsmall}[FS]{frame synchronization}  
\acro{FDMA}{Frequency Division Multiple Access}    

\acro{gNB}{generation node B base station}

\acro{GA}{Gaussian approximation}
\acro{GF}{Galois field }
\acro{GG}{Generalized-Gaussian}
\acro{GIC}[GIC]{generalized information criterion}
\acro{GLRT}{generalized likelihood ratio test}
\acro{GPS}{Global Positioning System}
\acro{GMSK}{Gaussian minimum shift keying}
\acro{GSMA}{Global System for Mobile communications Association}

\acro{HAP}{high altitude platform}

\acro{IDR}{information distortion-rate}
\acro{IFFT}{inverse fast Fourier transform}
\acro{iht}[IHT]{iterative hard thresholding}
\acro{i.i.d.}{independent, identically distributed}
\acro{IoT}{Internet of Things}                      
\acro{IR}{impulse radio}
\acro{lric}[LRIC]{lower restricted isometry constant}
\acro{lrict}[LRICt]{lower restricted isometry constant threshold}
\acro{ISI}{intersymbol interference}
\acro{ITU}{International Telecommunication Union}
\acro{ICNIRP}{International Commission on Non-Ionizing Radiation Protection}
\acro{IEEE}{Institute of Electrical and Electronics Engineers}
\acro{ICES}{IEEE international committee on electromagnetic safety}
\acro{IEC}{International Electrotechnical Commission}
\acro{IARC}{International Agency on Research on Cancer}
\acro{IS-95}{Interim Standard 95}

\acro{KPI}{Key Performance Indicator}

\acro{LEO}{low earth orbit}
\acro{LF}{likelihood function}
\acro{LLF}{log-likelihood function}
\acro{LLR}{log-likelihood ratio}
\acro{LLRT}{log-likelihood ratio test}
\acro{LOS}{Line-of-Sight}
\acro{LRT}{likelihood ratio test}
\acro{wlric}[LWRIC]{lower weak restricted isometry constant}
\acro{wlrict}[LWRICt]{LWRIC threshold}
\acro{LPWAN}{low power wide area network}
\acro{LoRaWAN}{Low power long Range Wide Area Network}

\acro{MB}{multiband}
\acro{MC}{multicarrier}
\acro{MDS}{mixed distributed source}
\acro{MF}{matched filter}
\acro{m.g.f.}{moment generating function}
\acro{MI}{mutual information}
\acro{MIMO}{multiple-input multiple-output}
\acro{MISO}{multiple-input single-output}
\acrodef{maxs}[MJSO]{maximum joint support cardinality}                       
\acro{ML}[ML]{maximum likelihood}
\acro{MMSE}{minimum mean-square error}
\acro{MMV}{multiple measurement vectors}
\acrodef{MOS}{model order selection}
\acro{M-PSK}[${M}$-PSK]{$M$-ary phase shift keying}                       
\acro{M-APSK}[${M}$-PSK]{$M$-ary asymmetric PSK} 

\acro{M-QAM}[$M$-QAM]{$M$-ary quadrature amplitude modulation}
\acro{MRC}{maximal ratio combiner}                  
\acro{maxs}[MSO]{maximum sparsity order}                                      
\acro{M2M}{machine to machine}                                                
\acro{MUI}{multi-user interference}
\acro{mMTC}{massive Machine Type Communications}      
\acro{mm-Wave}[mm-Wave]{millimeter-wave}
\acro{MP}{mobile phone}
\acro{MPE}{maximum permissible exposure}
\acro{MAC}{media access control}
\acro{NB}{narrowband}
\acro{NBI}{narrowband interference}
\acro{NLA}{nonlinear sparse approximation}
\acro{NLOS}{Non-Line of Sight}
\acro{NTIA}{National Telecommunications and Information Administration}
\acro{NTP}{National Toxicology Program}

\acro{OC}{optimum combining}                             
\acro{OC}{optimum combining}
\acro{ODE}{operational distortion-energy}
\acro{ODR}{operational distortion-rate}
\acro{OFDM}{orthogonal frequency-division multiplexing}
\acro{omp}[OMP]{orthogonal matching pursuit}
\acro{OSMP}[OSMP]{orthogonal subspace matching pursuit}
\acro{OQAM}{offset quadrature amplitude modulation}
\acro{OQPSK}{offset QPSK}
\acro{OFDMA}{Orthogonal Frequency-division Multiple Access}

\acro{OQPSK/PM}{OQPSK with phase modulation}

\acro{PAM}{pulse amplitude modulation}
\acro{PAR}{peak-to-average ratio}
\acrodef{pdf}[PDF]{probability density function}                      
\acro{PDF}{probability density function}
\acrodef{p.d.f.}[PDF]{probability distribution function}
\acro{PDP}{power dispersion profile}
\acro{PMF}{probability mass function}                             
\acrodef{p.m.f.}[PMF]{probability mass function}
\acro{PN}{pseudo-noise}
\acro{PPM}{pulse position modulation}
\acro{PRake}{Partial Rake}
\acro{PSD}{power spectral density}
\acro{PSEP}{pairwise synchronization error probability}
\acro{PSK}{phase shift keying}
\acro{PD}{power density}
\acro{8-PSK}[$8$-PSK]{$8$-phase shift keying}
\acro{PHP}{Poisson hole process}
 \acro{PPP}{Poisson point process}
\acrodefplural{PPP}{Poisson point processes}
\acro{FSK}{frequency shift keying}

\acro{QAM}{Quadrature Amplitude Modulation}
\acro{QPSK}{quadrature phase shift keying}
\acro{OQPSK/PM}{OQPSK with phase modulator }

\acro{RD}[RD]{raw data}
\acro{RDL}{"random data limit"}
\acro{ric}[RIC]{restricted isometry constant}
\acro{rict}[RICt]{restricted isometry constant threshold}
\acro{rip}[RIP]{restricted isometry property}
\acro{ROC}{receiver operating characteristic}
\acro{rq}[RQ]{Raleigh quotient}
\acro{RS}[RS]{Reed-Solomon}
\acro{RSC}[RSSC]{RS based source coding}
\acro{r.v.}{random variable}                               
\acro{R.V.}{random vector}
\acro{RMS}{root mean square}
\acro{RFR}{radiofrequency radiation}
\acro{RIS}{Reconfigurable intelligent surface}

\acro{SA}[SA-Music]{subspace-augmented MUSIC with OSMP}
\acro{SCBSES}[SCBSES]{Source Compression Based Syndrome Encoding Scheme}
\acro{SCM}{sample covariance matrix}
\acro{SEP}{symbol error probability}
\acro{SG}[SG]{sparse-land Gaussian model}
\acro{SIMO}{single-input multiple-output}
\acro{SINR}{signal-to-interference plus noise ratio}
\acro{SIR}{signal-to-interference ratio}
\acro{SISO}{single-input single-output}
\acro{SMV}{single measurement vector}
\acro{SNR}[\textrm{SNR}]{signal-to-noise ratio} 
\acro{sp}[SP]{subspace pursuit}
\acro{SS}{spread spectrum}
\acro{SW}{sync word}
\acro{SAR}{specific absorption rate}
\acro{SSB}{synchronization signal block}
\acro{SDG}{Sustainable Development Goal}

\acro{TH}{time-hopping}
\acro{ToA}{time-of-arrival}
\acro{TR}{transmitted-reference}
\acro{TW}{Tracy-Widom}
\acro{TWDT}{TW Distribution Tail}
\acro{TCM}{trellis coded modulation}
\acro{TDD}{time-division duplexing}
\acro{TDMA}{Time Division Multiple Access}

\acro{UAV}{unmanned aerial vehicle}
\acro{uric}[URIC]{upper restricted isometry constant}
\acro{urict}[URICt]{upper restricted isometry constant threshold}
\acro{UWB}{ultrawide band}
\acro{UWBcap}[UWB]{Ultrawide band}   
\acro{URLLC}{Ultra Reliable Low Latency Communications}
         
\acro{wuric}[UWRIC]{upper weak restricted isometry constant}
\acro{wurict}[UWRICt]{UWRIC threshold}                
\acro{UE}{user equipment}
\acro{UL}{uplink}

\acro{WiM}[WiM]{weigh-in-motion}
\acro{WLAN}{wireless local area network}

\acro{wm}[WM]{Wishart matrix}                               
\acroplural{wm}[WM]{Wishart matrices}
\acro{WMAN}{wireless metropolitan area network}
\acro{WPAN}{wireless personal area network}
\acro{wric}[WRIC]{weak restricted isometry constant}
\acro{wrict}[WRICt]{weak restricted isometry constant thresholds}
\acro{wrip}[WRIP]{weak restricted isometry property}
\acro{WSN}{wireless sensor network}                        
\acro{WSS}{wide-sense stationary}
\acro{WHO}{World Health Organization}
\acro{WP}{work package}

\acro{sss}[SpaSoSEnc]{sparse source syndrome encoding}
\acro{SO}{strategic objective}

\acro{VLC}{visible light communication}
\acro{RF}{radio frequency}
\acro{FSO}{free space optics}
\acro{IoST}{Internet of space things}

\acro{GSM}{Global System for Mobile Communications}
\acro{2G}{second-generation cellular network}
\acro{3G}{third-generation cellular network}
\acro{4G}{fourth-generation cellular network}
\acro{5G}{5th-generation cellular network}	
\acro{gNB}{next generation node B base station}
\acro{NR}{New Radio}
\acro{UN}{United Nations}
\acro{UMTS}{Universal Mobile Telecommunications Service}
\acro{LTE}{Long Term Evolution}

\acro{QoS}{quality of service}
\end{acronym}

\newcommand{\SAR} {\mathrm{SAR}}
\newcommand{\WBSAR} {\mathrm{SAR}_{\mathsf{WB}}}
\newcommand{\gSAR} {\mathrm{SAR}_{10\si{\gram}}}
\newcommand{\Sab} {S_{\mathsf{ab}}}
\newcommand{\Eavg} {E_{\mathsf{avg}}}
\newcommand{\ft}{f_{\textsf{th}}}
\newcommand{\alphatf}{\alpha_{24}}
\newcommand{\xvi}{{\mathbf{X}}_{i}}
\newcommand{\x}{{\mathbf{x}}}

%
\title{Joint Uplink and Downlink {EMF} Exposure: Performance Analysis and Design Insights  
%
}
\author{Lin Chen, Ahmed~Elzanaty,~\IEEEmembership{Senior Member,~IEEE}, Mustafa~A.~Kishk,~\IEEEmembership{Member,~IEEE}, Luca~Chiaraviglio,~\IEEEmembership{Senior Member,~IEEE},  and Mohamed-Slim Alouini,~\IEEEmembership{Fellow,~IEEE} 
\thanks{Lin Chen is with the Department of Information Engineering, The Chinese University of Hong Kong (CUHK), Hong Kong (e-mail: lin.chen@link.cuhk.edu.hk). }
\thanks{A. Elzanaty is with the 5GIC \& 6GIC, Institute for Communication Systems (ICS), University of Surrey, Guildford, GU2 7XH, United Kingdom (email: a.elzanaty@surrey.ac.uk).}
\thanks{M. A. Kishk is with the Department of Electronic Engineering, Maynooth University, Maynooth, W23 F2H6, Ireland (email: mustafa.kishk@mu.ie).}
\thanks{L. Chiaraviglio is with the Department of Electronic Engineering, Universita degli Studi di Roma Tor Vergata, 00133 Rome, Italy and Consorzio Nazionale Interuniversitario per le Telecomunicazioni (CNIT), Parma, Italy (email:  luca.chiaraviglio@uniroma2.it).}
\thanks{M.-S. Alouini is with KAUST, CEMSE division, Thuwal 23955-6900, Saudi Arabia (email: slim.alouini@kaust.edu.sa). }
}
\markboth{}{Elzanaty {\MakeLowercase{\textit{et al.}}}: Adaptive Coded Modulation for IM/DD Free-Space Optical Backhauling: A Probabilistic Shaping Approach}
\maketitle
%
%
%


\maketitle

\begin{abstract}
Installing more base stations (BSs) into the existing cellular infrastructure is an essential way to provide greater network capacity and higher data rate in the 5th-generation cellular networks (5G).
However, a non-negligible amount of population is concerned that such network densification will generate a notable increase in exposure to electric and magnetic fields (EMF) over the territory.
In this paper, we analyze the downlink, uplink, and joint downlink\&uplink exposure induced by the radiation from {BS}s and personal user equipment ({UE}), respectively, in terms of the received power density and exposure index. In our analysis, we consider the EMF restrictions set by the regulatory authorities such as the  minimum distance between restricted areas (e.g., schools and hospitals) and {BS}s, and the maximum permitted exposure. Exploiting tools from
stochastic geometry, mathematical expressions for the coverage probability and statistical {EMF} exposure are derived and validated. Tuning the system parameters such as the {BS} density and the minimum distance from a {BS} to restricted areas, we show a trade-off between reducing the population's exposure to EMF  and enhancing the network coverage performance. Then, we formulate optimization problems to maximize the performance of the {EMF}-aware cellular network while ensuring that the {EMF} exposure complies with the standard regulation limits with high probability. {For instance, the  exposure from BSs is two orders of magnitude less than the maximum permissible level when the density of BSs is less than $20$ $\text{BSs/km}^2$. }


\end{abstract}
\begin{IEEEkeywords}
Electric and magnetic fields exposure, stochastic geometry, coverage probability, Poisson hole process, EMF-aware cellular networks.
\end{IEEEkeywords}

%
\IEEEpeerreviewmaketitle

\section{Introduction}\label{sec:intro}
The fundamental requirements of \ac{5G} are low latency, high throughput, and wide coverage. One potential solution to accommodate \ac{5G} key performance indicators (KPIs) is to increase the number of \acp{BS}~\cite{boccardi2014five}.
The new \ac{5G} \acp{BS} inevitably act as additional radiation sources, {concerning some of the population about the increasing possibility of their exposure to \ac{EMF}.}  
Recently, human health related to the massive deployment of  \acp{BS} has raised public concerns \cite{switzerland}. 
{There is an urgent need to provide scientific analysis as we do not know if health effects (not known at present time) will be observed in the future.}

\textcolor{black}{
\ac{EMF} exposure in cellular networks mainly comes from \acp{BS} and \ac{UE}, related to passive exposure and active exposure, respectively. \acp{BS} emit high power through long distances in the downlink, imposing \ac{EMF} exposure to humans passively. The \ac{EMF} exposure from all \acp{BS} in the network is usually measured by the power density at the user, which can be easily transformed into electric strength~\cite{international1998guidelines}. 
Nevertheless, the exposure originating from \ac{UE} with low transmit power should also be taken into account due to the close distance between the user and the personal mobile device, which can be quantified by the received power density or the \ac{SAR}~\cite{ICNIRPGuidelines:18,2EMFsources}.
}
In fact, \ac{EMF} exposure associated with \ac{5G} \ac{RF} communications is considered as non-ionizing radiation that does not have enough energy to ionize the cells \cite{non-ionizing}. 
Nevertheless, the non-ionizing radiation is possible to generate heating effects in the exposed tissues, i.e., thermal effects~\cite{FosterZisBal:18,belpomme2018thermal}.
In order to guarantee that the thermal effects are below acceptable safe levels, \ac{EMF} exposure guidelines are  set such as those by 
{\ac{ITU}~\cite{itu2004guidance},
\ac{ICNIRP}~\cite{ICNIRPGuidelines:18}, and \ac{FCC}~\cite{FCC}. }
Each country has its own regulations on the safety limits to the EMF~\cite{whodata,SaudiEMF:21,chiang2009rationale,hkemf,liechtensteinemf}, mainly based on the aforementioned regulatory guidelines.
Besides these guidelines, as a further precautionary measure, some countries adopt more restricted conditions such as a minimum distance between \acp{BS} and restricted areas, e.g., schools and hospitals {\cite{hole2}}.
Besides the well-understood thermal effects, there is a debate about whether long-term exposure to \ac{RFR} may have non-thermal effects that can lead to health issues \cite{NTP:18a,WalMinKen:19}.  Therefore, accurate analysis of the exposure to \ac{EMF} is essential to permit designing \ac{EMF}-aware cellular networks.

\subsection{Related Work} \label{sec:related}
In this subsection, we discuss the most related work on the \ac{EMF} exposure, which can be divided into two categories: (i) \ac{EMF} exposure assessment and (ii) \ac{EMF}-aware network design.

{\em \ac{EMF} exposure assessment.}
The evaluation of \ac{EMF} exposure in cellular networks can be conducted from 
experimental measurement
and analytical points of view.
As for the experimental measurement,
the exposure induced by \acp{BS} and \acp{UE} 
was measured in~\cite{huang2016comparison}, which revealed that the exposure from a personal mobile device could not be ignored.
\textcolor{black}{
Considering the enabling technologies of \ac{5G} such as massive \ac{MIMO}, real-time beamforming, and high-frequency bands, \ac{5G} smartphones were used to capture the exposure level in a \ac{5G} network \cite{pawlak2019measuring}.
In \cite{chiaraviglio2021massive},  the distribution of \ac{EMF}  in \ac{5G} networks is investigated.
The authors in \cite{aerts2019situ} focused on the measurement of downlink exposure and showed that the exposure is well below the \ac{ICNIRP} reference level.}
Clearly, the experimental measurement of \ac{EMF} exposure illustrates the \ac{RFR} level under a specific cellular network, while it is not able to explore the effect of the system parameters on \ac{EMF} exposure for cellular network deployment. 

As for the analytical evaluation of \ac{EMF} exposure, authors in~\cite{chiaraviglio2021dense} considered a regular deployment of \acp{BS} under the assumption of a hexagonal mosaic territory.
Employing stochastic geometry, more realistic modeling of irregularly distributed \acp{BS} was given in~\cite{gontier2021stochastic}.
Compared with the experimental data of exposure in Brussels, Belgium, authors in~\cite{gontier2021stochastic} optimized the model parameters and 
verified the fitting effect of the proposed model.
\textcolor{black}{In \cite{al2020statistical}, 
the statistical received power at users was used to monitor the downlink exposure levels in a \ac{MIMO} system using tools from stochastic geometry. The above analytical modelling allows the prediction of the downlink \ac{EMF} exposure from \acp{BS} before the actual network deployment. Yet, the uplink exposure from mobile equipment and EMF restrictions such as the minimum distance between \acp{BS} and restricted areas have not been 
considered in those analytical models. Thus, these models do not lead to an accurate evaluation of total EMF exposure in real cellular networks.}


{\em \ac{EMF}-aware network design.}
\textcolor{black}{
Recently, several researchers proposed novel cellular architectures to reduce EMF exposure or improve coverage performance while limiting the exposure \cite{R1}.
\acp{RIS} were first proposed in \cite{chiaraviglio2021health} as a solution to create areas with reduced \ac{EMF}. In \cite{ibraiwish2021emf}, the \ac{RIS} phases were optimized to minimize the total uplink exposure of users. On the other hand, the work in \cite{R3} and \cite{R4} considered designing the \ac{RIS} phases to minimize the maximum exposure (min-max problem) with instantaneous and statistical channel state information, respectively. 
The authors in \cite{javedemf} applied probabilistic shaping to minimize the average \ac{EMF} exposure while ensuring a target throughput.
Considering the coexistence of macro-cell and small-cell \acp{BS}, the downlink EMF exposure and the coverage probability are studied in~\cite{muhammad2021stochastic}.
It is worth noting that the existing research on the \ac{EMF}-aware network design mainly aims at mitigating the downlink/uplink \ac{EMF} exposure on average and ensuring a target \ac{QoS}.
Nevertheless, the total exposure (including downlink and uplink exposure) is rarely taken into account when designing the cellular network despite the fact that people are exposed to the \ac{RFR} from both \acp{BS} and \acp{UE}~\cite{lou2021green}. 
Moreover, the regulation of restricted areas is seldom considered in the \ac{EMF}-aware network design; while this regulation is essential for the evaluation of \ac{EMF} exposure levels since the minimum distance between BSs and restricted areas can significantly reduce the exposure levels in restricted areas and can be selected by the system designers or decision makers. The effect of restricted areas on network performance has not been studied.
}

\subsection{Contributions}\label{sec:contri}
\textcolor{black}{
In this paper, we provide a novel framework to analyze the impact of \ac{EMF} exposure from both BSs and UE on the planning of a \ac{5G} cellular network from a statistical point of view.
In particular, the proposed cellular network model considers the restrictions on the \ac{EMF} exposure set by the regulatory authorities, including the maximum permitted \ac{EMF} exposure and the exclusion zones around restricted areas, e.g., hospitals and schools \cite{hole2}. 
Three scenarios are mainly studied in our model, including the downlink (passive exposure), the uplink (active exposure), and the joint downlink\&uplink (passive and active exposure) \ac{RFR}. 
Nevertheless, the inclusion of the restricted area, the maximum transmit power of the power with different channel inversion coefficients at \ac{UE}, and the combination of the uplink and downlink lead to some mathematical challenges. These challenges are handled through approximations such as \ac{PHP}, which are shown to be accurate through Monte Carlo simulations. 
The main contributions of our work are summarized as follows.
}
\textcolor{black}{
\begin{itemize}
    \item We propose a stochastic geometry model that captures the minimum distance ($R$) between a \ac{BS} and a restricted area. Specifically, restricted areas result in the \ac{PHP}-distributed \acp{BS}. Such \ac{PHP} approximation enables the following analysis of the effect of $R$ on the exposure and coverage performance. The accuracy of the approximation has been validated in simulation. 
    \item We quantify the downlink exposure from \acp{BS} and the uplink exposure from personal \ac{UE} by the received power density under the Nakagami-$m$ fading model. Correspondingly, we analyze the coverage probability in the downlink and uplink, respectively. Furthermore, considering a more practical scenario that people are exposed to radiation from both \acp{BS} and \ac{UE}, we provide the joint downlink\&uplink exposure analysis, referred to as exposure index. 
    \item We assess the compliance of cellular networks to the exposure guidelines.
    As opposed to the mean-value-based measurement of EMF exposure, our work provides the \ac{CDF} of exposure by using the Gil-Pelaez theorem.
    By comparing the $95$-th percentile of \ac{EMF} level with the standard limit defined by \ac{FCC}, we ensure that the EMF exposure level complies with guidelines with a high probability.
  \item  We design the system parameters, e.g., the density of the \acp{BS} and the minimum distance between the restricted areas and \acp{BS}, to \textit{(i)} maximize the coverage performance constrained by the $95$-th percentile of \ac{EMF} exposure in the downlink
  and to \textit{(ii)} minimize the joint downlink\&uplink exposure. Numerical results show the impact of system parameters on network performance and exposure, and there exist optimal values of system parameters with respect to the minimal total exposure.
  These analyses provide insights into the design of future networks to meet the \ac{QoS} and safety requirements.
\end{itemize}
}


The rest of the paper is structured as follows. Sec.~\ref{sec:system} introduces the system model. The downlink analysis of the statistical \ac{EMF} and SNR-based coverage probability are given in Sec.~\ref{sec:downlink}, including the design of system parameters.
Similar performance metrics in the uplink are presented in Sec.~\ref{sec:uplink}.
{Sec.~\ref{sec:DLandUL}} adopts exposure index to consider the total influence of both uplink exposure and downlink exposure on the population. Then, the simulation results are shown and discussed in Sec.~\ref{sec:simu}. Finally, Sec.~\ref{sec:conc} concludes the paper.

\begin{table*}[t!]\caption{Table of notations}
\centering
\begin{center}
\resizebox{\textwidth}{!}{
\renewcommand{\arraystretch}{1.4}
    \begin{tabular}{ {c} | {c} }
    \hline
        \hline
    \textbf{Notation} & \textbf{Description} \\ \hline
    ${\Psi_b}$ & The \ac{PPP} modeling the baseline locations of the \acp{BS}\\ \hline
    $\Psi_{r}$ & The \ac{PPP} modeling the locations of the restricted areas \\ \hline
    $\Psi_{B}$ & The \ac{PHP} modeling the locations of \acp{BS} affected by restricted areas \\ \hline   
    ${\lambda_b}$; $\lambda_{r}$; $\lambda_{B}$ & The density of ${\Psi_b}$; $\Psi_{r}$; $\Psi_{B}$ \\ \hline 
    $\alpha$; $\beta$ & The value of the path-loss exponent for calculating \ac{SNR}; \ac{EMF} exposure \\ \hline
           $p_{\rm max}$ & The maximum transmit power in the uplink\\ \hline
           ${j}$ & ${j} =\sqrt{-1}$ is the imaginary unit\\ \hline
           $\rm {SAR}^{UL}$; $\rm {SAR}^{DL}$ & The value of \ac{SAR} in the uplink; downlink \\ \hline
    \end{tabular}}
\end{center}
\label{tab:TableOfNotations}
\end{table*}

\section{System Model}\label{sec:system}
This section describes the considered system of a cellular network with the \ac{EMF}-deployment constraint and corresponding stochastic geometry-based model.  
%

\begin{figure}%
 \subfigure[]{\includegraphics[width=0.42\columnwidth]{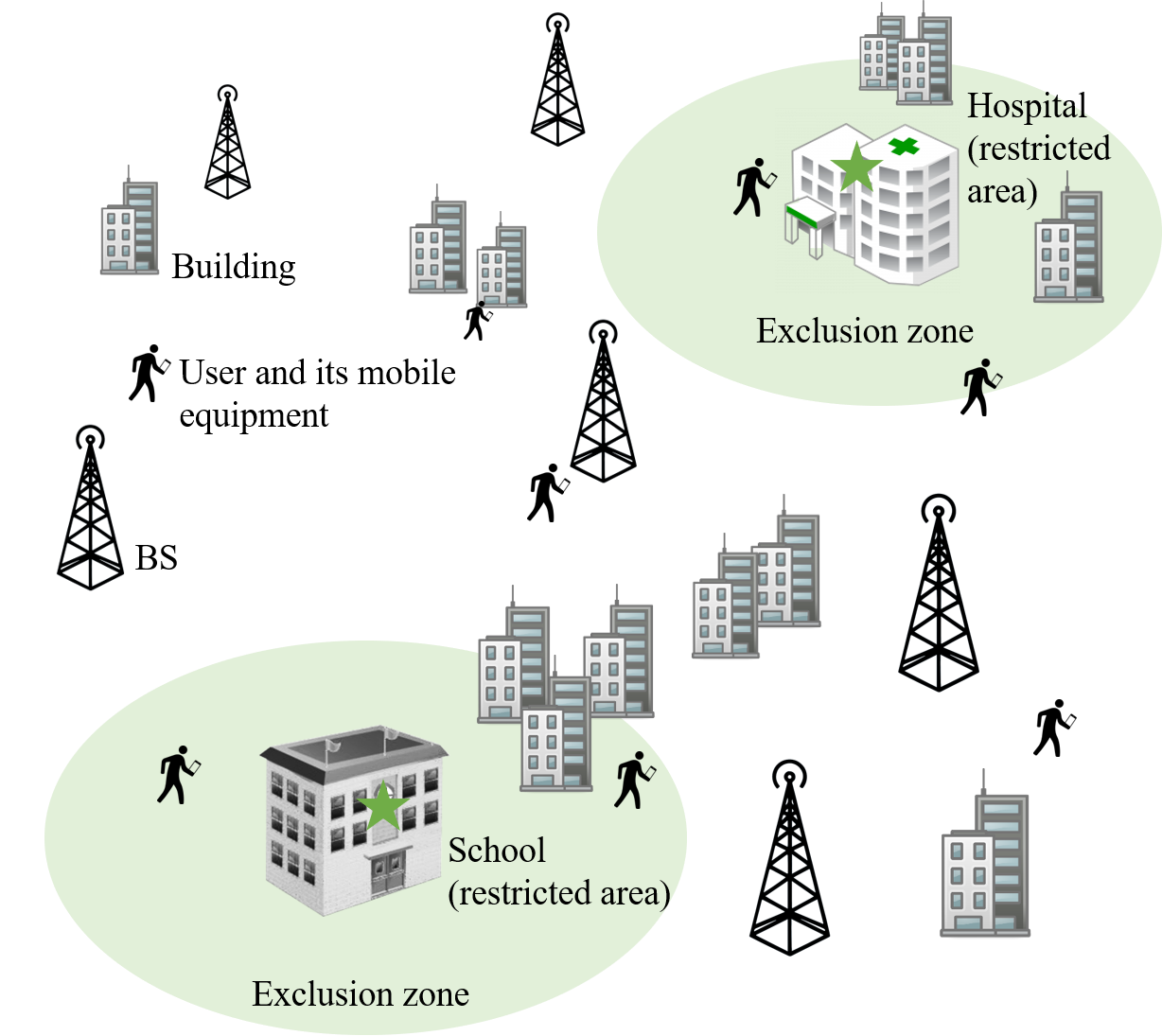}\label{fig:system1}}
 \hfill
 \subfigure[]{\includegraphics[width=0.53\columnwidth]{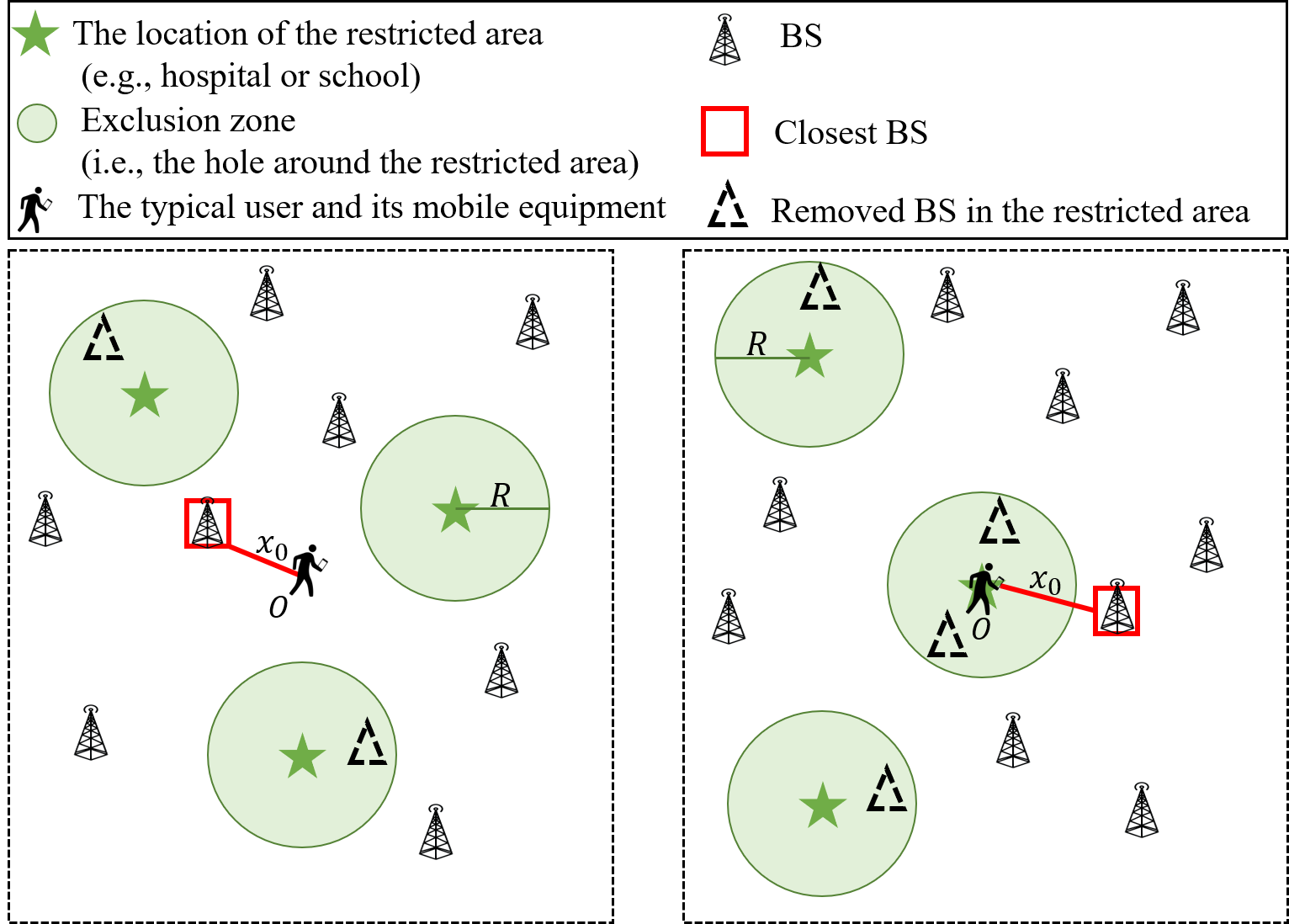}\label{fig:system2}}
 \caption{A cellular network considering the restricted areas. (a) a general scenario  (b) geometric representation (left: the typical user outside the exclusion zone, right: the typical user inside the exclusion zone).}
 \label{fig:system} 
\end{figure}

We investigate a network composed of \acp{BS} and \acp{UE}, which are equipped with omnidirectional antennas. 
\textcolor{black}{In fact, the case of directional antennas can be handled under the proposed framework with the scaled density of BSs {and antenna gain}\cite{directional}.} 
Particularly, the deployment of \acp{BS} complies with the EMF regulation on the minimum distance between \acp{BS} and the restricted areas (such as schools and hospitals). As shown in Fig.~\ref{fig:system1}, there are several restricted areas surrounded by exclusion zones/holes where \acp{BS} are not allowed (by \ac{EMF} regulation) to be deployed~\cite{hole2}.
\textcolor{black}{The radius of the exclusion zones/holes, denoted by $R$, can be regarded as the minimum distance between \acp{BS} and the restricted areas.
Using tools from stochastic geometry, the locations of BSs and users are generally assumed to follow two independent homogeneous \acp{PPP}~\cite{andrews2016primer}. However, considering the exclusion zones around restricted areas, we propose to model the locations of the \acp{BS} as a \ac{PHP}.} The \ac{PHP} is generated using two independent \acp{PPP}:
\begin{enumerate}
    \item the baseline \ac{PPP} 
    ${\Psi}_b\equiv\{b_i\}\subset\mathbb{R}^2$ with density ${\lambda}_{b}$.
    \item  the \ac{PPP} modeling the locations of the restricted areas, which represent the centers of the exclusion zones (i.e., holes), $\Psi_r\equiv\{r_i\}\subset\mathbb{R}^2$ with density $\lambda_{r}$.
\end{enumerate}
Therefore, the locations of the \acp{BS} construct a \ac{PHP} $\Psi_B$, formally defined as follows
\begin{align}
\Psi_B=\left\{b_i\in{\Psi}_b:b_i\notin\bigcup_{r_i\in\Psi_r}\mathcal{B}(r_i,R) \right\},
\end{align}
where $\mathcal{B}(r_i,R)$ is an exclusion zone/hole, a disk centered at $r_i$ with radius $R$. The density of $\Psi_B$ can be approximated as
\begin{equation}\label{eq:densityB}
\lambda_{B}={\lambda_{b}} \exp(-\lambda_{r}R^2).
\end{equation}
%
Fig.~\ref{fig:system2} presents a realization of \ac{PHP}-distributed \acp{BS}, marked by triangles with solid black borders. \acp{BS} that fall in the exclusion zones (i.e. green circles) are removed and become dashed triangles. 
\textcolor{black}{Besides, a \ac{BS} only serves a single \ac{UE} in each time-frequency recourse block. The nearest association rule is considered for the connection between BSs and users, i.e., a user is connected with its nearest BS.
Without loss of generality, we focus on the analysis on a typical user located at the origin of the network~\cite{haenggi2012stochastic}.}
Due to the limitation of the restricted areas, the location of the typical user (inside/outside the exclusion zone) impacts its performance. Therefore, in the following, we distinguish the analysis of the typical user in two different locations: outside or inside the exclusion zone, as shown in Fig.~\ref{fig:system2}. 



%


\section{\ac{EMF}-aware Downlink Exposure}\label{sec:downlink}
In this section, the downlink performance metrics are derived {to characterize} the statistics of the \ac{EMF} exposure, the coverage probability, and the allowable distance between a \ac{BS} and a user. Then, we formulate an optimization problem to maximize the downlink coverage probability subject to the \ac{EMF} exposure constraint.

\subsection{Performance Metrics: Downlink Exposure and Coverage}\label{sec:probelmDL}
The downlink \ac{EMF} exposure is quantified by the received power density from all \acp{BS} over the territory. 
Following the system model discussed in Sec.~\ref{sec:system}, the output power of a \ac{BS} is 
\begin{align}\label{metrics:PtranDL}
p=P_t G_t,
\end{align}
where $P_t$ is the transmit power of the \ac{BS} and $G_t$ is the antenna gain.
{Based on the Friis law, the received power at the typical user from a \ac{BS} with horizontal distance $x_i$ is given by
\begin{equation}\label{eq:Prec}
P_{\rm rec}(x_i)=\frac{p\, G_u\, \eta\, H_i}{(x_i^2+h^2)^{\beta/2}},
\end{equation}
where $G_u$ is the user's antenna gain, {$\eta=(\frac{c/f}{4 \pi d_0})^2$ represents the path loss at reference distance $d_0=1~\rm m$ (with frequency $f$ and wave speed $c=3\times10^8 ~\rm m/s$),} $h$ is the height of \acp{BS}, $\beta$ is the path-loss exponent, and
\textcolor{black}{
$H_i$ is the small-scale fading gain with an expected value of $1$, i.e., $\mathbb{E} \left \{ H_i \right \} =1$. More specifically, we adopt the widely-used Nakagami-$m$ fading model (with shaping parameters given by $m$) to characterize the small-scale fading. The \ac{PDF} of $H_i$ is given by~\cite{Nakagami}
\begin{align}\label{eq:ssfading}
f_{H_i}(\omega )=\frac{m^{m} \omega^{m-1}}{\Gamma(m) } e^{-m\omega },
\end{align}
where {$\Gamma\left ( m \right )=\int_{0}^{\infty} t^{m-1}e^{-t}\mathrm{d}t$  is the Gamma function~\cite{davis1959leonhard}}.}
The corresponding received power density can be expressed as
\begin{align}
W(x_i) & = \frac{P_{\rm rec}(x_i)}{A_e}=\frac{p H_i}{4\pi (x_i^2+h^2) ^{\beta/2}},
\end{align}
where $A_e=\frac{(c/f)^2 G_u}{4\pi}$ is the antenna effective area of the typical user. }
Since the \ac{BS} height is negligible compared with the horizontal distance $x_i$, for simplicity, the Euclidean distance between a BS and the typical user is approximated as their horizontal distance in the rest of the paper.
Hence, for the typical user located at the origin, the downlink \ac{EMF} exposure is
\begin{align}\label{metrics:WDL}
W^{\rm DL}=\sum_{i,b_i\in\Psi_B}\frac{p H_i}{4\pi x_i ^\beta}.
\end{align}
For this setup, our objective is to ensure that the value of $W^{\rm DL}$ is below the maximum allowed value by regulatory authorities with a high probability\cite{ITU5G:19,inversion}.  
In particular, the below condition (termed downlink \ac{EMF} constraint)
is required to be satisfied
\begin{align}\label{metrics:constrainDL}
\mathbb{P}(W^{\rm DL}\leq W_{\max})\geq\rho,
\end{align}
where $W_{\max}$ is the maximum power density specified by the guidelines. The value of $\rho$ can be chosen as $0.95$, according to the assessment of compliance regulation by \ac{ITU} \cite{ITU5G:19}. This is opposed to the deterministic approaches where $W^{\rm DL}$ is not considered as a \ac{r.v.}, and it should be strictly less than $W_{\max}$, i.e., $\rho \equiv 1$ \cite{Chiaraviglio2018planning}.

From \eqref{eq:Prec}, when the typical user associates with its closest \ac{BS} with a distance of $x_0$, the received power in the downlink can be expressed as
\begin{equation}\label{eq:PrecDL}
{
P^{\rm DL}_{\rm rec}(x_0)=\frac{p \eta H_0}{ x_0^{\alpha}},
}
\end{equation}
where the \ac{BS} height is ignored, $G_u$ is assumed to be $1$, and $\alpha$ is the path-loss exponent.
The instantaneous \ac{SNR} in the downlink is given by
\begin{align}\label{metrics:snrDL}
{{\rm SNR^{DL}}=\frac{P^{\rm DL}_{\rm rec}(x_0)}{\sigma^2} =\frac{p \eta H_0}{ x_0^{\alpha}\sigma^2},}
\end{align}
where $\sigma^2$ is noise power, \textcolor{black}{and $H_0$ is the small-scale fading  following \eqref{eq:ssfading}}. 
Note that the path-loss exponent is represented by two different notations, i.e., $\alpha$ and $\beta$, for calculating the \ac{EMF} exposure and SNR, respectively, which enables us to analyze the worst case and the typical case by setting different relations between $\alpha$ and $\beta$.
For a given value of ${\lambda}_B$, the downlink coverage probability can be defined as follows,
%
\begin{align}\label{metrics:covDL}
	\mathcal{P}^{\rm DL}_{\rm cov}({\lambda}_B)=\mathbb{P}({\rm SNR^{\rm DL}}>\tau),
\end{align}
where $\tau$ is a predefined threshold.

\subsection{Performance Analysis}
This subsection provides several steps to derive the mathematical expressions for the performance metrics formally defined in Sec.~\ref{sec:probelmDL} from the perspective of the typical user outside or inside the hole, respectively.
For instance, considering that \acp{BS} are excluded from the exclusion zones, the distance between the typical user located in the restricted area and its nearest \ac{BS} must be larger than the radius of the exclusion zone.
\subsubsection{\textbf{Distance Distribution to the Closest \ac{BS}}}
Let $X_{\rm out}$ and $X_{\rm in}$ denote the distance from the typical user to its closest \ac{BS} given that the user is outside or inside the hole, respectively. The following lemma presents the distribution of the contact distances $X_{\rm out}$ and $X_{\rm in}$.
\begin{lemma}\label{lemma:ClosestDL}
The \ac{PDF} of the distance between the typical user and the closest \ac{BS} is denoted by
\begin{subequations}\label{eq:fX}
\begin{align}
f_{X_{\rm out}}(x)&=2\pi\lambda_{B}x\exp\left ( -\lambda_{B}\pi x^2 \right ) ,x\ge 0,\label{eq:fXout}\\
f_{X_{\rm in}}(x)&=\left\{\begin{matrix}
 2\pi\lambda_{B}x\exp\left ( -\lambda_{B}\pi(x^2-R^2) \right ), & {\rm if}~x\ge R\\
 0\hfill, & {\rm if}~x< R,
\end{matrix}\right.\label{eq:fXin}
\end{align}
\end{subequations}
where $\lambda_{B}={\lambda_{b}} \exp(-\lambda_{r}R^2)$, (\ref{eq:fXout}) is for the user outside the hole, and (\ref{eq:fXin}) is for the user inside the hole.

\begin{proof}
See Appendix~\ref{app:lemma1}.
\end{proof}

\end{lemma}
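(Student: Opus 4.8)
The plan is to obtain each density by first computing the corresponding complementary distribution (the void probability that no \ac{BS} lies within a given distance) and then differentiating. The one simplification that makes this tractable is to treat the \ac{PHP} $\Psi_B$ as an \emph{approximate} homogeneous \ac{PPP} with the matched intensity $\lambda_B=\lambda_b\exp(-\lambda_r R^2)$ from \eqref{eq:densityB}; this is the standard first-order approximation of a \ac{PHP}, in which the spatial correlations induced by the holes are neglected and only their thinning effect on the mean density is retained. Under this approximation the number of \acp{BS} falling in any region $\mathcal{A}\subset\mathbb{R}^2$ is Poisson with mean $\lambda_B\lvert\mathcal{A}\rvert$, so void probabilities factor as $\exp(-\lambda_B\lvert\mathcal{A}\rvert)$.

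For the typical user outside any exclusion zone, I would write the event $\{X_{\rm out}>x\}$ as the event that the disk $\mathcal{B}(0,x)$ contains no \ac{BS}. Its probability is the void probability $\exp(-\lambda_B\pi x^2)$, giving the \ac{CDF} $F_{X_{\rm out}}(x)=1-\exp(-\lambda_B\pi x^2)$ for $x\ge 0$. Differentiating with respect to $x$ yields \eqref{eq:fXout}, the familiar Rayleigh-type nearest-neighbour law of a planar \ac{PPP}.

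For the typical user inside an exclusion zone, I would place it at the centre of the hole of radius $R$, so that, by the very definition of $\Psi_B$, the disk $\mathcal{B}(0,R)$ is guaranteed to be free of \acp{BS}. Hence $X_{\rm in}\ge R$ deterministically, which fixes the support and forces $f_{X_{\rm in}}(x)=0$ for $x<R$. For $x\ge R$, the event $\{X_{\rm in}>x\}$ is the event that no \ac{BS} lies in the annulus $R\le r\le x$, the inner disk being already empty; applying the same Poisson void probability over the annular area $\pi(x^2-R^2)$ gives $F_{X_{\rm in}}(x)=1-\exp(-\lambda_B\pi(x^2-R^2))$, and differentiation produces \eqref{eq:fXin}.

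The main obstacle is not the calculus but the legitimacy of the \ac{PPP} substitution: the surviving \acp{BS} of a \ac{PHP} are negatively correlated near each hole, so the genuine void and contact-distance statistics differ from the Poisson ones, and for the inside case the conditioning on sitting at a hole centre additionally perturbs the \ac{BS} intensity just beyond radius $R$. I would argue that these deviations are second-order in the hole density and are absorbed into the effective intensity $\lambda_B$; the accuracy of this matched-density approximation is precisely what the paper validates against Monte Carlo simulation, so I would defer the quantitative justification to that numerical check rather than attempt an exact (and far more involved) \ac{PHP} contact-distance derivation.
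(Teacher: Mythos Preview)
Your proposal is correct and follows essentially the same approach as the paper: approximate the \ac{PHP} by a \ac{PPP} with intensity $\lambda_B$, compute the void probability over the disk (outside case) or the annulus (inside case), and differentiate the resulting \ac{CDF}. Your additional discussion of the limitations of the \ac{PPP} substitution is more thorough than the paper's own appendix, which simply invokes the approximation and the null probability without further comment.
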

\subsubsection{\textbf{Coverage Probability}}
The probability that \ac{SNR} is above a predefined threshold, i.e., the \ac{CCDF} of \ac{SNR}, is used as a metric to describe \ac{QoS} of the cellular network. Based on the distance distribution of $X_{\rm out}$ and $X_{\rm in}$, we develop the expression of the downlink coverage probability relative to a \ac{SNR} threshold $\tau$.
\begin{theorem}\label{theor:covDL} 
The downlink coverage probability of the typical user served by its closest \ac{BS} is 
%
\begin{equation}\label{eq:covDL}
{\mathcal{P}}^{\rm DL}_{\rm cov}=\int_{0}^{\infty} \sum_{k=0}^{m-1} \frac{(s^{\rm DL}\sigma^{2})^k}{k!}\exp\left ( -s^{\rm DL}\sigma^2\right )f_{X_{v}}(x_0)\,\mathrm{d}x_0,~s^{\rm DL}=\frac{m\tau}{p \eta x_0^{-\alpha}},~  v\in \left \{ \rm in ,out \right \},
\end{equation}
where $v$ represents whether or not the typical user is in a restricted area.
\end{theorem}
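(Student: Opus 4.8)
The plan is to condition on the random serving distance $x_0$ and reduce the coverage probability to the complementary CDF of the Nakagami-$m$ fading gain $H_0$. First I would invoke the law of total probability over the contact distance, whose density $f_{X_v}$ is supplied by \cref{lemma:ClosestDL}, writing
\[
\mathcal{P}^{\rm DL}_{\rm cov}=\int_0^\infty \mathbb{P}\!\left(\frac{p\eta H_0}{x_0^\alpha\sigma^2}>\tau \,\Big|\, X_v=x_0\right) f_{X_v}(x_0)\,\mathrm{d}x_0.
\]
For a fixed $x_0$ the only randomness left is the fading gain, so I would rearrange the \ac{SNR} inequality from \eqref{metrics:snrDL} to isolate $H_0$, giving
\[
\mathbb{P}\!\left(\frac{p\eta H_0}{x_0^\alpha\sigma^2}>\tau\right)=\mathbb{P}\!\left(H_0>\frac{\tau x_0^\alpha\sigma^2}{p\eta}\right),
\]
so that the conditional coverage probability is exactly the CCDF of $H_0$ evaluated at the deterministic threshold $h=\tau x_0^\alpha\sigma^2/(p\eta)$.

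The key analytical step is evaluating this CCDF. Since $H_0$ obeys the Nakagami-$m$ density \eqref{eq:ssfading}, which is a Gamma law with both shape and rate equal to $m$, for integer $m$ the upper tail admits the closed form
\[
\mathbb{P}(H_0>h)=\int_h^\infty \frac{m^{m}\omega^{m-1}}{\Gamma(m)}e^{-m\omega}\,\mathrm{d}\omega=e^{-mh}\sum_{k=0}^{m-1}\frac{(mh)^k}{k!}.
\]
This is the one nontrivial ingredient; I would establish it either by repeated integration by parts or, more cleanly, by recognizing the integral as the regularized upper incomplete Gamma function $Q(m,mh)$, whose Poisson-sum representation is valid precisely when the shape parameter $m$ is a positive integer.

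Finally I would substitute $h=\tau x_0^\alpha\sigma^2/(p\eta)$ and note that the stated definition $s^{\rm DL}=m\tau/(p\eta x_0^{-\alpha})=m\tau x_0^\alpha/(p\eta)$ yields $mh=s^{\rm DL}\sigma^2$, so the conditional coverage becomes $\sum_{k=0}^{m-1}\frac{(s^{\rm DL}\sigma^2)^k}{k!}e^{-s^{\rm DL}\sigma^2}$. Reinserting this series into the outer integral over $x_0$, with $f_{X_v}$ selected as \eqref{eq:fXout} or \eqref{eq:fXin} according to whether the typical user sits outside or inside the exclusion zone, reproduces \eqref{eq:covDL} for both cases simultaneously. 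The main obstacle here is not conceptual but bookkeeping: verifying that the algebraic identification $mh=s^{\rm DL}\sigma^2$ is exactly consistent with the given definition of $s^{\rm DL}$, and confirming that the integer-$m$ restriction needed for the finite Gamma-tail sum is compatible with the Nakagami shaping parameter used throughout.
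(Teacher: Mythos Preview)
Your proposal is correct and follows essentially the same approach as the paper's proof: condition on the contact distance via \cref{lemma:ClosestDL}, rewrite the SNR event as a threshold on $H_0$, evaluate the Gamma CCDF using the upper-incomplete-Gamma/Poisson-sum identity for integer $m$, and identify $mh=s^{\rm DL}\sigma^2$. The paper's Appendix~\ref{app:theorem1} carries out exactly these steps (writing the tail as $\Gamma_u(m,s^{\rm DL}\sigma^2)/\Gamma(m)$ before invoking the finite-sum identity), and likewise handles the inside/outside cases by swapping $f_{X_{\rm out}}$ for $f_{X_{\rm in}}$.
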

\begin{proof}
See Appendix~\ref{app:theorem1}.
\end{proof}

\subsubsection{\textbf{\ac{EMF} Constraint}} The \ac{EMF} exposure is required to be below the maximum allowed value ($W_{\max}$) by the authority regulations with a high probability ($\rho$).
(\ref{metrics:constrainDL}) can be rewritten in terms of the \ac{CDF} of the downlink power density, i.e., $F_{W^{\rm DL}}(w)$, as
\begin{align}
\begin{split}
 F_{W^{\rm DL}}(W_{\max})\geq\rho.
\end{split}
\end{align}
Therefore, the \ac{EMF} constraint can be expressed as 
\begin{equation}\label{eq:inverseF}
    F^{-1}_{W^{\rm DL}}(\rho) \leq W_{\max},
\end{equation}
where $F^{-1}_{W^{\rm DL}}(\rho)$ is the inverse function of $F_{W^{\rm DL}}(w)$.

\begin{theorem}\label{theor:emfDL}
The \ac{CDF} of the downlink \ac{EMF} exposure is given by
\begin{equation}\label{eq:FLemfDL}
\begin{split}
F_{W^{\rm DL}}(w)=
\end{split}\frac{1}{2} -\frac{1}{2 {j} \pi}\int_{0}^{\infty}\frac{1}{t}\left [ e^{-{j} tw}\mathcal{L}_{W^{\rm DL}}(-{j} t)- e^{{j} tw}\mathcal{L}_{W^{\rm DL}}({j} t) \right ] {\rm d} t,   
\end{equation}
where $\mathcal{L} _{W^{\rm DL}}(s)$ is the Laplace transform of downlink {\ac{EMF}} exposure and 
\begin{equation}\label{eq:LemfDL}
\begin{split}
\mathcal{L} _{W^{\rm DL}}(s)
&=\textcolor{black}{\exp\left ( -2\pi\lambda_B\int_{v(R)}^{\infty}\left [ 1-\kappa^{\rm DL}(x,s)  \right ]  x\,\mathrm{d}x \right )}
,\\
\kappa^{\rm DL}(x,s)& =\left(\frac{m}{m+sp(4\pi)^{-1}x^{-\beta}}\right )^m,
\end{split}
\end{equation}
where $R$ is the radius of the hole, $\lambda_B$ is the density of \ac{PHP}-distributed \acp{BS}, and $v(R)=0$ if the typical user is outside the hole, otherwise $v(R)=R$.

\end{theorem}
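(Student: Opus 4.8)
The plan is to combine two standard tools: the Gil--Pelaez inversion theorem to recover the \ac{CDF} from the characteristic function of $W^{\rm DL}$, and the probability generating functional (PGFL) of a \ac{PPP} to evaluate the Laplace transform $\mathcal{L}_{W^{\rm DL}}$. The bridge between the two is the elementary identity $\phi_{W^{\rm DL}}(t)\triangleq\mathbb{E}[e^{{j}tW^{\rm DL}}]=\mathcal{L}_{W^{\rm DL}}(-{j}t)$, so the characteristic function is just the Laplace transform evaluated along the imaginary axis. Accordingly, the real content of the theorem is the closed form \eqref{eq:LemfDL}, while \eqref{eq:FLemfDL} is a rewriting of a known inversion formula.

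First I would establish \eqref{eq:FLemfDL}. Since $W^{\rm DL}\ge 0$ is real-valued, $\overline{\phi_{W^{\rm DL}}(t)}=\phi_{W^{\rm DL}}(-t)=\mathcal{L}_{W^{\rm DL}}({j}t)$, so the bracketed term in \eqref{eq:FLemfDL} equals $e^{-{j}tw}\phi_{W^{\rm DL}}(t)-\overline{e^{-{j}tw}\phi_{W^{\rm DL}}(t)}=2{j}\,\mathrm{Im}\!\big[e^{-{j}tw}\phi_{W^{\rm DL}}(t)\big]$. Substituting this into \eqref{eq:FLemfDL} collapses the prefactor $\tfrac{1}{2{j}\pi}$ to $\tfrac{1}{\pi}$ and reproduces the classical Gil--Pelaez formula $F_{W^{\rm DL}}(w)=\tfrac12-\tfrac1\pi\int_0^\infty t^{-1}\,\mathrm{Im}[e^{-{j}tw}\phi_{W^{\rm DL}}(t)]\,\mathrm{d}t$, which holds for any real random variable whose $\phi(t)/t$ is integrable. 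Thus \eqref{eq:FLemfDL} is nothing more than Gil--Pelaez expressed through $\mathcal{L}_{W^{\rm DL}}(\pm{j}t)$.

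The main computation is the Laplace transform $\mathcal{L}_{W^{\rm DL}}(s)=\mathbb{E}\big[\exp(-s\sum_{b_i\in\Psi_B}pH_i/(4\pi x_i^\beta))\big]$. I would condition on the point pattern and use that the $\{H_i\}$ are i.i.d.\ and independent of $\Psi_B$, which factorizes the expectation over the points; averaging each factor over the Nakagami-$m$ gain with the Gamma density \eqref{eq:ssfading} (shape and rate $m$) gives $\mathbb{E}_H[e^{-uH}]=\left(\frac{m}{m+u}\right)^m$, and setting $u=sp(4\pi)^{-1}x^{-\beta}$ yields exactly the per-\ac{BS} kernel $\kappa^{\rm DL}(x,s)$. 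Modelling $\Psi_B$ as a \ac{PPP} of intensity $\lambda_B$ (the \ac{PHP} approximation of Sec.~\ref{sec:system}) and applying the PGFL $\mathbb{E}[\prod_i f(x_i)]=\exp(-\lambda_B\int_{\mathbb{R}^2}[1-f(x)]\,\mathrm{d}x)$, then passing to polar coordinates ($\mathrm{d}x=2\pi x\,\mathrm{d}x$), produces the exponential-integral form in \eqref{eq:LemfDL}. The inside/outside distinction enters only through the lower limit $v(R)$: a user outside every hole may have a \ac{BS} arbitrarily close, so the radial integral starts at $v(R)=0$, whereas a user inside a hole sits in an exclusion disk void of \acp{BS} out to radius $R$, so the integral starts at $v(R)=R$.

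I expect the delicate step to be the last one, namely justifying the \ac{PPP} surrogate for the genuinely correlated \ac{PHP} and, in particular, the treatment of the in-hole case. The exact \ac{PHP} has neither a product-form PGFL nor an exactly empty disk of radius $R$ around an \emph{interior} user, since the guaranteed void is centred at the restricted-area centre rather than at the user, so the clean limits $v(R)\in\{0,R\}$ together with the reduced intensity $\lambda_B=\lambda_b\exp(-\lambda_r R^2)$ constitute an approximation rather than an identity. I would state this explicitly and defer its empirical validation to the Monte Carlo comparison in Sec.~\ref{sec:simu}; the remaining manipulations (the fading average, the PGFL, the polar change of variables, and the Gil--Pelaez substitution) are routine.
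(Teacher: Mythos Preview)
Your proposal is correct and follows essentially the same route as the paper's proof in Appendix~\ref{app:theorem2}: compute $\mathcal{L}_{W^{\rm DL}}$ by averaging over the Nakagami-$m$ gains to obtain $\kappa^{\rm DL}$ and then applying the PGFL of the \ac{PPP} approximation of $\Psi_B$ (with radial lower limit $0$ or $R$), and recover the \ac{CDF} via Gil--Pelaez. Your explicit remark that the in-hole lower limit $v(R)=R$ is itself an approximation (the guaranteed void is centred at the restricted-area centre, not the user) is a welcome clarification that the paper leaves implicit.
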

\begin{proof}
See Appendix~\ref{app:theorem2}.
\end{proof}

\subsubsection{\textbf{Compliance Distance Between Users and \acp{BS}}} \label{subsec:Xcom}
The distance between a BS and a user should comply with the \ac{EMF} constraint, i.e., conditioned on the serving \ac{BS} located at a compliance distance $x_c$ to the typical user, the corresponding conditional \ac{EMF} exposure should not exceed the maximum allowable limit ($W_{\max}$) with a high probability ($\rho$). Namely, $x_c$ satisfies the conditional \ac{EMF} constraint as follows,
\begin{equation}\label{eq:inverseF_x0}
    F_{W^{\rm DL}|X_{\rm out}=x_c}(W_{\max}) \geq \rho,
\end{equation}
where $F_{W^{\rm DL}|X_{\rm out}=x_c}(W_{\max})$ is the \ac{CDF} of the downlink \ac{EMF} exposure conditioned on the distance between the serving \ac{BS} and the user being $x_c$.
The minimum value of $x_c$ that satisfies (\ref{eq:inverseF_x0}) is denoted by $x_{\rm com}$ and is given by
\begin{align}
    x_{\rm com} \triangleq \inf_{x\in \mathbb{R}} \left\{x:   F_{W^{\rm DL}|X_{\rm out}=x}(W_{\max}) \geq \rho  \right\}.
\end{align}
No public access is allowed to the area centered at a \ac{BS} with the radius of $x_{\rm com}$, since people inside will experience downlink \ac{EMF} exposure above the safety threshold. 
%

\begin{theorem}\label{theor:emfx0}
The \ac{CDF} of the {\ac{EMF}} exposure conditioned on the serving \ac{BS} being at a distance $x_0$ from the typical user is given by
\begin{equation}\label{eq:FLemfDLx0}
\begin{split}
F_{W^{\rm DL}|x_0}(w)
&=\frac{1}{2} -\frac{1}{2{j} \pi}\int_{0}^{\infty}\frac{1}{t}\left [ e^{-{j} tw}\mathcal{L}_{W^{\rm DL}|x_0}(-{j} t)- e^{{j} tw}\mathcal{L}_{W^{\rm DL}|x_0}({j} t) \right ]\mathrm{d}t ,  
\end{split}
\end{equation}
where $\mathcal{L}_{W^{\rm DL}|x_0}(s)$ is the Laplace transform of conditional \ac{EMF} exposure and 
\begin{equation}\label{eq:X_com}
\mathcal{L}_{W^{\rm DL}|x_0}(s)
{=}
\textcolor{black}{\kappa^{\rm DL}(x_0,s)\exp\left ( -2\pi\lambda_B\int_{x_0}^{\infty}\left [ 1-\kappa^{\rm DL}(x,s)  \right ]  x\,\mathrm{d}x \right )},
\end{equation}
where $\lambda_B$ is given in (\ref{eq:densityB}) and $\kappa^{\rm DL}(\cdot)$ is given in \eqref{eq:LemfDL}.
\end{theorem}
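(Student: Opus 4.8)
The plan is to derive the conditional Laplace transform $\mathcal{L}_{W^{\rm DL}|x_0}(s)$ first and then recover the conditional \ac{CDF} by Gil--Pelaez inversion, mirroring the proof of Theorem~\ref{theor:emfDL} but retaining the serving \ac{BS} explicitly. Conditioned on the nearest \ac{BS} being at distance $x_0$, the exposure in \eqref{metrics:WDL} splits as $W^{\rm DL}=\frac{p H_0}{4\pi x_0^\beta}+I$, where $I=\sum_{b_i\in\Psi_B,\,x_i>x_0}\frac{p H_i}{4\pi x_i^\beta}$ collects the remaining \acp{BS}. Treating $\Psi_B$ through its \ac{PPP} approximation of density $\lambda_B$ and invoking Slivnyak's theorem, conditioning on a point at $x_0$ with no closer points leaves the residual process as a homogeneous \ac{PPP} of density $\lambda_B$ on $\mathbb{R}^2\setminus\mathcal{B}(0,x_0)$, independent of the serving-link fading $H_0$.

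I would then evaluate the two factors separately. Because $H_0$ is Nakagami-$m$ with the gamma density \eqref{eq:ssfading}, the moment generating function of the serving term is $\mathbb{E}\{\exp(-s p H_0/(4\pi x_0^\beta))\}=(m/(m+sp(4\pi)^{-1}x_0^{-\beta}))^m=\kappa^{\rm DL}(x_0,s)$. For the aggregate interference, applying the probability generating functional of the residual \ac{PPP} over $\mathbb{R}^2\setminus\mathcal{B}(0,x_0)$ and averaging over the i.i.d. fading gains gives $\mathbb{E}\{e^{-sI}\}=\exp(-2\pi\lambda_B\int_{x_0}^\infty[1-\kappa^{\rm DL}(x,s)]\,x\,\mathrm{d}x)$, identical to the transform in Theorem~\ref{theor:emfDL} except that the lower limit of integration is now $x_0$ rather than $v(R)$. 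Since $H_0$ and the residual process are independent, the conditional Laplace transform factors into the product $\mathcal{L}_{W^{\rm DL}|x_0}(s)=\kappa^{\rm DL}(x_0,s)\exp(-2\pi\lambda_B\int_{x_0}^\infty[1-\kappa^{\rm DL}(x,s)]\,x\,\mathrm{d}x)$, which is \eqref{eq:X_com}.

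To pass from the transform to the \ac{CDF} \eqref{eq:FLemfDLx0}, I would invoke the Gil--Pelaez theorem: writing the conditional characteristic function as $\varphi(t)=\mathcal{L}_{W^{\rm DL}|x_0}(-{j}t)$ and using $F(w)=\tfrac12-\tfrac{1}{\pi}\int_0^\infty t^{-1}\,\mathrm{Im}\{e^{-{j}tw}\varphi(t)\}\,\mathrm{d}t$, the identity $\mathrm{Im}\{z\}=(z-\bar z)/(2{j})$ together with $\varphi(-t)=\mathcal{L}_{W^{\rm DL}|x_0}({j}t)$ reproduces the bracketed form in \eqref{eq:FLemfDLx0}. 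The main obstacle is the conditioning step: one must argue that ``nearest \ac{BS} at $x_0$'' produces the reduced Palm process on the exterior of $\mathcal{B}(0,x_0)$ and that the extra factor $\kappa^{\rm DL}(x_0,s)$ is exactly the transform of the isolated serving contribution, while relying on the \ac{PHP}-to-\ac{PPP} approximation of density $\lambda_B$ used throughout the paper.
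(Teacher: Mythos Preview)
Your proposal is correct and follows essentially the same approach as the paper: decompose the conditional exposure into the serving-\ac{BS} term plus the aggregate from the remaining \acp{BS}, compute the Laplace transform of the former via the Nakagami-$m$ \ac{MGF} to obtain $\kappa^{\rm DL}(x_0,s)$, apply the \ac{PPP} PGFL over $\{x>x_0\}$ for the latter, and finish with Gil--Pelaez inversion. Your invocation of Slivnyak's theorem to justify the reduced process on $\mathbb{R}^2\setminus\mathcal{B}(0,x_0)$ is slightly more explicit than the paper's treatment, but the argument is otherwise identical.
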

\begin{proof}
See Appendix~\ref{app:theorem3}.
\end{proof}

When setting $w=W_{\max}$ and $\rho$ as a constant defined by the \ac{ITU}, $F_{W^{\rm DL}|x_0}(w)$ is a function of $x_0$. Using its inverse function, we can find the minimum compliance distance between a user and a \ac{BS}, $x_{\rm com}$. Thus, the compliance distance is $x_c \ge x_{\rm com}$.

\subsection{Optimal \ac{EMF}-aware Design}\label{subsec:optDL}
We try to maximize the downlink coverage probability while ensuring that the downlink \ac{EMF} constraint is always satisfied. This optimization problem can be formally defined as follows,
%
\begin{equation}\label{eq:optproblem1}
\begin{split}
{\rm {\mathbf{OP_1}:\,}}
&\underset{{\lambda}_B\in\mathbb{Z}_{+}}{\text{maximize} }\quad\mathcal{P}^{\rm DL}_{\rm cov}({\lambda}_B)
\\&\text{subject to:} \quad\mathbb{P}(W^{\rm DL}\leq W_{\rm max})\geq\rho.
\end{split}
\end{equation}
Intuitively, the network would have better coverage performance as the \ac{BS} density $\lambda_B$ increases since the distance between the typical user and its closest \ac{BS} becomes closer and the path loss is correspondingly reduced, thereby improving the downlink \ac{SNR}.
However, the \ac{EMF} exposure is expected {to increase}, which means the optimal density $\lambda_B^\ast$ (corresponding to a specific  baseline density of \acp{BS}, $\lambda_b^\ast$) can be found by gradually increasing the value of $\lambda_B$ until $F^{-1}_{W^{\rm DL}}(\rho,\lambda_B^\ast)$ reaches $W_{\max}$. Based on (\ref{eq:densityB}), increasing $\lambda_B$ can be realized by reducing the density and the radius of holes or increasing the baseline density.

\section{\ac{EMF}-aware Uplink Exposure}\label{sec:uplink}
In this section, we focus on the \ac{EMF} exposure induced by individual mobile equipment. We define the uplink performance metrics and then provide corresponding analysis.

\subsection{Performance Metrics: Uplink Exposure and Coverage}\label{sec:probelmUL}
We now analyze the uplink exposure of the cellular network described in Sec.~\ref{sec:system}. 
The mobile equipment deploys a power control mechanism on its transmit power to compensate for the path loss, fully or partially depending on  the power control factor $\epsilon$, i.e., $\epsilon=1$ and $0<\epsilon<1$, respectively. 
The transmit power at the mobile equipment can be defined as 
\begin{equation}\label{eq:PtranUL}
P^{\rm UL}_{\rm tran}(x_0)=\left\{\begin{matrix}
{p_{u}x_0^{\alpha \epsilon}}, & x_0< X_{\rm max}\\
 p_{\rm max}, & {\rm otherwise},~
\end{matrix}\right.
\epsilon \in (0,1],
\end{equation}
where $p_u$ is a constant, $\alpha$ is the path-loss exponent, $x_0$ is the distance between the \ac{BS} and its serving user, $p_{\rm max}$ is the maximum transmit power of the mobile equipment, and $X_{\max}=\left(p_{\rm max}/p_u\right)^{1/\alpha\epsilon}$.
Let $u_0$ denote the distance from the user to its personal mobile equipment and assume that $u_0$ is in the far-filed of the transmit antenna of mobile equipment.\footnote{\textcolor{black}{Nowadays, the smartphone is not used as a plain old telephone attached to the ear, but rather it is put in front of the chest and used for, e.g., chatting, exploring social media, watching videos, or listening to streaming audio.}}
The distance from the \ac{BS} to its associated user or mobile equipment is assumed to be equivalent \textcolor{black}{since the distance from the typical user to its own mobile device is much shorter than that to its serving \ac{BS}, i.e., $u_0<<x_0$.}
Unlike the downlink exposure that comes from all \acp{BS} in the network, the uplink exposure is dominated by the user's personal mobile equipment~\cite{kuehn2019modelling,R2}. \textcolor{black}{Specifically, the transmit power at mobile equipment is quite low and attenuates significantly after long-distance transmission. Therefore, the exposure from other users' mobile devices to the typical user is negligible. However, due to the close distance between the typical user and its own mobile device, the corresponding exposure can not be ignored.} Thus,
the uplink \ac{EMF} exposure at the typical user can be assessed by the received power density from its mobile equipment (termed as typical \ac{UE}) as 
\begin{equation}\label{metrics:WUL}
W^{\rm UL}(x_0)=\frac{P^{\rm UL}_{\rm tran}(x_0)\Omega}{4\pi u_0^{\beta}}=
\left\{\begin{matrix}
 \frac{p_{u}x_0^{\alpha \epsilon }\Omega}{4\pi u_0^{\beta}}, & x_0< X_{\rm max}\\
 \frac{p_{\rm max}\Omega}{4\pi u_0^{\beta}}\hfill, & {\rm otherwise},
\end{matrix}\right.
\end{equation}
where $x_0$ is the distance between the typical user and its serving \ac{BS} (termed as tagged \ac{BS}) and \textcolor{black}{$\Omega$ follows the Nakagami-$m$ fading in \eqref{eq:ssfading} with mean $1$}. The uplink \ac{EMF} exposure is also required to be below the maximum permitted value by regulatory authorities with a high probability of $\rho$, i.e., the uplink \ac{EMF} constraint can be expressed as
\begin{align}\label{metrics:constrainUL}
\mathbb{P}(W^{\rm UL}\leq W_{\max})\geq\rho.
\end{align}
The received power at the tagged \ac{BS} from the typical \ac{UE} with a distance $x_0$ is given by
\begin{equation}\label{eq:PrecUL}
P^{\rm UL}_{\rm rec}(x_0)=\frac{P^{\rm UL}_{\rm tran}(x_0)\eta \Omega_0}{x_0^{\alpha}}=
\left\{\begin{matrix}
 {p_{u}x_0^{\alpha( \epsilon -1)} \eta  \Omega_0}, & x_0< X_{\rm max}\\
 \frac{p_{\rm max}\eta \Omega_0}{x_0^{\alpha}}\hfill, & {\rm otherwise},
\end{matrix}\right.
\end{equation}
where \textcolor{black}{$\Omega_0$ is the small-scale fading and has the same distribution as $\Omega$}, and $\eta $ is defined in \eqref{eq:Prec}.
In particular, if $\epsilon=1$, it is a full power control case, and the average received power will be a constant when $x_0<X_{\max}$.
 
The corresponding instantaneous \ac{SNR} in the uplink can be expressed as 
\begin{equation}\label{eq:SNRUL}
{\rm SNR^{UL}}=\frac{P^{\rm UL}_{\rm rec}(x_0)}{\sigma^2},
\end{equation}
which can be used to compute the uplink coverage probability as follows 
\begin{align}\label{metrics:covUL}
	\mathcal{P}^{\rm UL}_{\rm cov}({\lambda}_B)=\mathbb{P}({\rm SNR^{\rm UL}}>\tau),
\end{align}
where $\lambda_B$ is givn in (\ref{eq:densityB}) and $\tau$ is the predefined threshold.

\subsection{Performance Analysis}
Similar to the downlink case, we analyze the uplink exposure of the typical user inside and outside the hole, respectively, and the uplink coverage probability. The distribution of the distance between the typical user and its closest \ac{BS} in Lemma~\ref{lemma:ClosestDL} is still applicable to the following analysis. 
\subsubsection{\textbf{Coverage Probability}}
The following theorem gives the expression of the coverage probability under the power control mechanism, which is defined as the \ac{CCDF} of the uplink \ac{SNR}.
\begin{theorem}\label{theor:covUL} 
The uplink coverage probability of the users inside or outside the hole is given by 
%
\begin{equation}\label{eq:covUL}
\begin{split}
\mathcal{P}^{{\rm UL}}_{\rm cov}
=&\int_{0}^{X_{\rm max}} \sum_{k=0}^{m-1} \frac{(s^{\rm UL}_1\sigma^{2})^k}{k!}\exp\left ( - s^{\rm UL}_1\sigma^{2} \right ) 
f_{X_{v}}(x_0) \mathrm{d}x_0 ~+\\&
\int_{X_{\rm max}}^{\infty}  \sum_{k=0}^{m-1} \frac{(s^{\rm UL}_2\sigma^{2})^k}{k!}\exp\left ( - s^{\rm UL}_2\sigma^{2}\right )
f_{X_{v}}(x_0) \mathrm{d}x_0, 
\end{split}
\end{equation}
where $v\in\left \{ \rm in, out \right \}$, $s^{\rm UL}_1=\frac{m\tau}{ p_{u}x_0^{\alpha (\epsilon-1)} \eta  }$, $s^{\rm UL}_2=\frac{m\tau}{ p_{\rm max}\eta x_0^{-\alpha}  }$,
and $f_{X_v}(x_0)$ is given in (\ref{eq:fX}). 
\end{theorem}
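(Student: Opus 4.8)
The plan is to compute the uplink coverage probability by conditioning on the distance $x_0$ between the typical user and its tagged \ac{BS}, then integrating against the contact-distance density $f_{X_v}$ from Lemma~\ref{lemma:ClosestDL} via the law of total probability. Since the received power $P^{\rm UL}_{\rm rec}(x_0)$ in \eqref{eq:PrecUL} is piecewise---path-loss-compensated for $x_0 < X_{\max}$ and saturated at $p_{\max}$ for $x_0 \ge X_{\max}$---I would split the $x_0$-integral into the two matching ranges. Within each range the only residual randomness is the small-scale fading $\Omega_0$, so the conditional coverage probability reduces to a tail probability of $\Omega_0$.

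The core step is evaluating $\mathbb{P}({\rm SNR^{UL}} > \tau \mid x_0)$ in each branch. For $x_0 < X_{\max}$, inserting the first branch of \eqref{eq:PrecUL} into \eqref{eq:SNRUL} gives ${\rm SNR^{UL}} = p_u x_0^{\alpha(\epsilon-1)}\eta\,\Omega_0/\sigma^2$, so the event becomes $\{\Omega_0 > \tau\sigma^2/(p_u x_0^{\alpha(\epsilon-1)}\eta)\}$; the analogous computation for $x_0 \ge X_{\max}$ yields $\{\Omega_0 > \tau\sigma^2 x_0^{\alpha}/(p_{\max}\eta)\}$. Because $\Omega_0$ follows the Nakagami-$m$ (normalized Gamma) law of \eqref{eq:ssfading} with integer shape parameter $m$, its complementary \ac{CDF} admits the finite-sum closed form $\mathbb{P}(\Omega_0 > y) = e^{-my}\sum_{k=0}^{m-1}(my)^k/k!$. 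Setting $my$ equal to $s_1^{\rm UL}\sigma^2$ in the first branch and to $s_2^{\rm UL}\sigma^2$ in the second---with $s_1^{\rm UL}=m\tau/(p_u x_0^{\alpha(\epsilon-1)}\eta)$ and $s_2^{\rm UL}=m\tau x_0^{\alpha}/(p_{\max}\eta)$---reproduces precisely the two summations appearing in \eqref{eq:covUL}.

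Finally, I would weight each conditional coverage expression by $f_{X_v}(x_0)$ and integrate over its respective interval, summing the two contributions to obtain \eqref{eq:covUL}; the index $v\in\{\mathrm{in},\mathrm{out}\}$ merely selects the appropriate contact-distance density from \eqref{eq:fX} according to whether the typical user lies inside or outside an exclusion zone, and the same derivation applies verbatim in both cases. The main obstacle is the fading-tail step: one must correctly invoke the integer-$m$ upper-incomplete-Gamma identity and carefully track the $x_0$-dependent thresholds so that they collapse exactly into $s_1^{\rm UL}$ and $s_2^{\rm UL}$. Everything else---the conditioning, the piecewise split dictated by $X_{\max}$, and the final integration---is routine once Lemma~\ref{lemma:ClosestDL} supplies the density $f_{X_v}$.
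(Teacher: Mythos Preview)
Your proposal is correct and follows essentially the same approach as the paper: condition on the contact distance $x_0$, use the Nakagami-$m$ upper-incomplete-Gamma tail identity for the fading, and integrate against $f_{X_v}$ from Lemma~\ref{lemma:ClosestDL}, with the additional (natural) split at $X_{\max}$ dictated by the piecewise definition of $P^{\rm UL}_{\rm rec}$ in \eqref{eq:PrecUL}. The paper in fact omits the details entirely, simply pointing to the analogous downlink derivation, so your write-up is if anything more complete.
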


\begin{proof}
\textcolor{black}{Similar to the method in Appendix~\ref{app:theorem2}.}
\end{proof}
\subsubsection{\textbf{\ac{EMF} Constraint}}
The uplink \ac{EMF} exposure at the typical user is mostly from its own mobile equipment, which should be lower than the maximum allowed value by authority regulations with a high probability. Similar to the downlink \ac{EMF} constraint, (\ref{metrics:constrainUL}) can be derived from the Laplace transform of the uplink received power density, denoted by $\mathcal{L} _{W^{\rm UL}}$ as
%
\begin{equation}\label{eq:LemfUL}
\textcolor{black}{\mathcal{L} _{W^{\rm UL}}(s)=
\int_{0}^{X_{\rm max}} \kappa^{\rm UL}_1(x_0,s){f_{X_v}(x_0)} \,\mathrm{d}x_0
+\int_{X_{\rm max}}^{\infty} \kappa^{\rm UL}_2(s){f_{X_v}(x_0)}\,\mathrm{d}x_0,}
\end{equation}
where $u_0$ is the distance between the typical user and its mobile equipment, $f_{X_{v}}(x_0)$ is defined in Lemma~\ref{lemma:ClosestDL},  $v\in\left\{\rm in, out\right\}$,  \textcolor{black}{$\kappa^{\rm UL}_1(x_0,s)=\left(\frac{m}{m+sp_{u}x_0^{\alpha \epsilon}(4\pi)^{-1}u_0^{-\beta}}\right )^m$, and $\kappa^{\rm UL}_2(s)=\left(\frac{m}{m+sp_{\rm max}(4\pi)^{-1}u_0^{-\beta}}\right )^m$.}

%
\begin{theorem}\label{theor:emfUL} 
The \ac{CDF} of the uplink {\ac{EMF}} exposure is given by
\begin{equation}\label{eq:FLemfUL}
\begin{split}
F_{W^{\rm UL}}(w)=
\end{split}\frac{1}{2} -\frac{1}{2{j} \pi}\int_{0}^{\infty}\frac{1}{t}\left [ e^{-{j} tw}\mathcal{L}_{W^{\rm UL}}(-{j} t)- e^{{j} tw}\mathcal{L}_{W^{\rm UL}}({j} t) \right ] \mathrm{d}t.  
\end{equation}
\end{theorem}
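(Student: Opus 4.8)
The plan is to obtain the CDF directly from the Laplace transform $\mathcal{L}_{W^{\rm UL}}(s)$ derived in \eqref{eq:LemfUL} by invoking the Gil-Pelaez inversion theorem, mirroring the argument used for the downlink exposure in Appendix~\ref{app:theorem2}. First I would observe that, since $W^{\rm UL}$ is a nonnegative real-valued random variable, its characteristic function $\phi_{W^{\rm UL}}(t)=\mathbb{E}\{e^{jtW^{\rm UL}}\}$ is recovered from the Laplace transform by continuation to the imaginary axis, i.e. $\phi_{W^{\rm UL}}(t)=\mathcal{L}_{W^{\rm UL}}(-jt)$. The expression in \eqref{eq:LemfUL} is already an expectation over both the Nakagami-$m$ fading $\Omega$ (which produces the factors $\kappa^{\rm UL}_1$ and $\kappa^{\rm UL}_2$ through the Nakagami moment generating function) and the contact-distance density $f_{X_v}(x_0)$, with the two integration regions reflecting the power-control switch at $x_0=X_{\max}$ in \eqref{eq:PtranUL}; hence it is the valid object to continue.

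Next I would apply the Gil-Pelaez theorem, which for a real random variable $X$ with characteristic function $\phi(t)$ yields $F_X(w)=\tfrac12-\tfrac1\pi\int_0^\infty t^{-1}\,\mathrm{Im}\{e^{-jtw}\phi(t)\}\,\mathrm{d}t$. Substituting $\phi(t)=\mathcal{L}_{W^{\rm UL}}(-jt)$ and writing the imaginary part as $\mathrm{Im}\{z\}=(z-\bar z)/(2j)$, I would use the fact that $W^{\rm UL}$ is real so that $\overline{\mathcal{L}_{W^{\rm UL}}(-jt)}=\mathcal{L}_{W^{\rm UL}}(jt)$ and $\overline{e^{-jtw}}=e^{jtw}$. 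Collecting terms then reproduces exactly \eqref{eq:FLemfUL}, with the bracket $[\,e^{-jtw}\mathcal{L}_{W^{\rm UL}}(-jt)-e^{jtw}\mathcal{L}_{W^{\rm UL}}(jt)\,]$ and the prefactor $1/(2j\pi)$. This algebraic step is routine and identical in form to the downlink case.

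The main obstacle I anticipate is the analytic justification rather than the algebra: one must verify that $\mathcal{L}_{W^{\rm UL}}(s)$ admits the continuation to $s=\pm jt$ and that the resulting inversion integrand is absolutely integrable near $t=0$ and at infinity, so that the integral converges and Fubini permits interchanging the fading/distance expectation in \eqref{eq:LemfUL} with the inversion integral. Near $t=0$ the bracketed term is $O(t)$ (the two conjugate contributions cancel to first order, the bracket being $2j\,\mathrm{Im}\{\cdot\}$ which vanishes at $t=0$), so it cancels the $1/t$ singularity; at infinity the decay follows because each $\kappa^{\rm UL}$ factor has modulus bounded by $1$ and decays like $t^{-m}$. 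Since existence and boundedness of $\mathcal{L}_{W^{\rm UL}}$ on the imaginary axis is inherited directly from \eqref{eq:LemfUL}, these conditions hold, and I would simply defer the detailed bookkeeping to Appendix~\ref{app:theorem2}, whose argument transfers verbatim.
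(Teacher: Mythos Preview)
Your proposal is correct and follows essentially the same approach as the paper: the paper itself does not give a separate proof for Theorem~\ref{theor:emfUL} but simply relies on the Gil-Pelaez inversion already displayed in Appendix~\ref{app:theorem2} (equation~\eqref{eq:CDF}), applied now to $\mathcal{L}_{W^{\rm UL}}(s)$ from \eqref{eq:LemfUL}. Your additional remarks on the analytic continuation and integrability near $t=0$ and at infinity go beyond what the paper spells out, but they are consistent with and justify the same argument.
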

Therefore, the uplink \ac{EMF} constraint can be described by the inverse function of $F_{W^{\rm UL}}(w)$, i.e., $F^{-1}_{W^{\rm UL}}(\rho) \le W_{\max}$.

From above analysis, we notice that, under the power control mechanism, the closer distance between the user and its serving \ac{BS} leads to lower transmit power at its \ac{UE}.
Unlike what we discussed in Sec.~\ref{subsec:optDL}, increasing the number of \acp{BS} is beneficial for both the reduction on the exposure and the improvement on the coverage probability in the uplink.
Therefore, the uplink optimization problem is omitted.



\section{Joint Downlink\&Uplink Exposure}\label{sec:DLandUL}
In most situations, a person is exposed to \ac{EMF} from both his/her own mobile equipment and \acp{BS}.
In this section, we consider a metric called exposure index that accounts for both the uplink and  downlink exposure in the cellular network discussed in Sec.~\ref{sec:system}, followed by the performance analysis and the optimal \ac{EMF}-aware network design.
\subsection{Performance Metrics: Exposure Index}\label{sec:probelmULandDL}
In~\cite{sar}, a metric that quantifies the population exposure to \ac{EMF} is introduced. This metric considers both the uplink and downlink \ac{EMF} exposure. It also accounts for user-specific properties such as age (adult or child), usage (data or voice call), and posture (standing or sitting).
{
The total exposure (termed exposure index) at the typical user can be expressed as the sum of uplink exposure index (${\rm EI^{UL}}$) and downlink exposure index (${\rm EI^{DL}}$), which is given by~\cite{sar}
%
\begin{equation}\label{eq:EI}
\begin{split}
{\rm EI}(x_0)&={\rm EI^{UL}}(x_0)+{\rm EI^{DL}}={\rm SAR^{UL}}P^{\rm UL}_{\rm tran}(x_0)+{\rm SAR^{DL}}W^{\rm DL} 
\\&={\rm SAR^{UL}}P^{\rm UL}_{\rm tran}(x_0)+{\rm SAR^{DL}}\sum_{i,b_i\in \Psi_B}\frac{pH_i}{4\pi x_i^{\beta}} \quad \left [ \rm W/kg \right ],
\end{split}
\end{equation}
%
where $x_0$ is the distance between the typical user and the serving \ac{BS} located ar $b_0$,
$P^{\rm UL}_{\rm tran}$ is the uplink transmit power defined in (\ref{eq:PtranUL})}, $W^{\rm DL}$ is the downlink received power density defined in (\ref{metrics:WDL}), ${\rm SAR^{UL}}$ is the reference induced \ac{SAR} in the uplink when the transmit power from the \ac{UE} is unity, and ${\rm SAR^{DL}}$ is the reference induced \ac{SAR} in the downlink when the received power density from the \ac{BS} at the \ac{UE} is unity. More clearly,  ${\rm SAR^{UL}}\, \left(\rm \frac{W}{kg}/W\right)$ is normalized to unit transmit power and ${\rm SAR^{DL}}\,  \left(\rm \frac{W}{kg}/\frac{W}{m^2}\right)$ is normalized to unit power density. It is worth noting that the reference \ac{SAR} depends on the user-specific properties mentioned earlier.


\subsection{Performance Analysis}

The exposure index in (\ref{eq:EI}) can be divided into two parts, i.e., one is related to the serving \ac{BS} at $b_0$ with distance $x_0$ and the other is from the rest of \ac{BS}, as follows.
\begin{equation}\label{eq:EI1}
\begin{split}
{\rm EI}(x_0)=\left ( {\rm SAR^{UL}}P^{\rm UL}_{\rm tran}(x_0)+ {\rm SAR^{DL}}\frac{pH_0}{4\pi x_0^{\beta}}\right ) 
+ {\rm SAR^{DL}}\sum_{i,b_i\in \Psi_B \setminus \{b_0\}}\frac{pH_i}{4\pi x_i^{\beta}}.
\end{split}
\end{equation}
%
The Laplace transform of the exposure index ${\rm EI}(x_0)$ conditioned on $x_0$ is given by
%
\begin{equation}\label{eq:LEIx0}
\begin{split}
\textcolor{black}{\mathcal{L}_{{\rm EI}|x_0}(s)={\exp(-s{\rm SAR^{UL}}P^{\rm UL}_{\rm tran}(x_0))}\kappa^{\rm J}(x_0,s)
\exp\left ( -2\pi\lambda_B\int_{x_0}^{\infty}[1-\kappa^{\rm J}(x,s)]\,x\,\mathrm{d}x \right ),}
\end{split}
\end{equation}
where $\kappa^{\rm J}(x_0,s)=\left(\frac{m}{m+s{\rm SAR^{DL}}p(4\pi)^{-1} x_0^{-\beta}}\right)^m$.
\begin{proof}
The Laplace transform of ${\rm EI}(x_0)$ in \eqref{eq:EI1} is given by
\begin{equation}\label{eq:EI2}
\begin{split}
&\mathcal{L}_{{\rm EI}|x_0}(s)=\mathbb{E_{\rm EI}}\left [ \exp(-s{\rm EI}(x_0)) \right ]
 \\&=\mathbb{E_{\rm EI}}\left [ \exp\left [ -s\left ( \!{\rm SAR^{UL}}P^{\rm UL}_{\rm tran}(x_0)+ \!{\rm SAR^{DL}}\frac{p H_0}{4\pi x_0^{\beta}}\right ) 
\!-\!s \left ( {\rm SAR^{DL}}\!\sum_{i,b_i\in \Psi_B \setminus \{b_0\}}\frac{pH_i}{4\pi x_i^{\beta}} \right ) \right ]   \right ]
 \\&=\exp(-s{\rm SAR^{UL}}P^{\rm UL}_{\rm tran}(x_0))
\mathbb{E}_{H_0}\left [ \exp\left ( -s{\rm SAR^{DL}}\frac{pH_0}{4\pi x_0^{\beta}} \right )  \right ]\times
\\&\quad\,\, \mathbb{E}_{\Psi_B,\left \{ H_i \right \}}\left [ \prod_{i,b_i\in \Psi_B \setminus \{b_0\}} \exp\left ( -s{\rm SAR^{DL}}\frac{pH_i}{4\pi x_i^{\beta}} \right )   \right ] .
\end{split}
\end{equation}
Then, we process (\ref{eq:EI2}) by following the similar approaches in Appendix~\ref{app:theorem3} and thus obtain \eqref{eq:LEIx0}.
\end{proof}

Based on the distance distribution of $x_0$ in Lemma~\ref{lemma:ClosestDL}, the unconditional Laplace transform of the exposure index can be obtained as 
\begin{equation}\label{eq:LEI}
\begin{split}\
\mathcal{L}_{{\rm EI}}(s)=\int_{0}^{\infty} \mathcal{L}_{{\rm EI}|x_0}(s)f_{X_{v}}(x_0)\,\mathrm{d} x_0,
\end{split}
\end{equation}
where $f_{X_{v}}(x_0)$ is given in (\ref{eq:fX}).
Additionally, the Laplace transform of ${\rm EI^{UL}}$ and ${\rm EI^{DL}}$ can be expressed as
\begin{subequations}
\begin{align}
\mathcal{L}_{{\rm EI^{UL}}}(s)&\!=\!\mathbb{E}\left [ \exp\left ( {\rm SAR^{UL}}P^{\rm UL}_{\rm tran}(x_0) \right )  \right ]  
\!=\!\int_{0}^{\infty} \exp\left ( -s{\rm SAR^{UL}}P^{\rm UL}_{\rm tran}(x_0)  \right ) f_{X_{v}}(x_0)\,\mathrm{d} x_0,
\label{eq:EIul}
\\
\mathcal{L}_{{\rm EI^{DL}}}(s)&\!=\!\mathbb{E}\left [ \exp\left ( {\rm SAR^{DL}}W^{\rm DL} \right )  \right ] 
\!=\! \textcolor{black}{\exp\left (-2\pi\lambda_B\int_{v(R)}^{\infty}\left [ 1-\kappa^J(x,s)  \right ] x\,\mathrm{d} x  \right ) ,}
\label{eq:EIdl}
\end{align}
\end{subequations}
where \eqref{eq:EIul} is from $\mathbb{E}[g(t)]=\int_{0}^{\infty} g(t)f_T(t)\mathrm{d}t$, $f_T(t)$ is the \ac{PDF} of $t$, and \eqref{eq:EIdl} can be obtained by following the similar methods to Appendix~\ref{app:theorem2}. Based on the Gil-Pelaez theorem in (\ref{eq:CDF}), we derive the \ac{CDF} of $\rm EI$ in the following theorem.
\begin{theorem}\label{theor:EI}
The \ac{CDF} of the total exposure including the uplink exposure and the downlink exposure (i.e. exposure index) is given by
\begin{equation}\label{eq:FLEI}
\begin{split}
F_{\rm EI}(\varepsilon)=
\end{split}\frac{1}{2} -\frac{1}{2{j} \pi}\int_{0}^{\infty}\frac{1}{t}\left [ e^{-{j} t\varepsilon}\mathcal{L}_{{\rm EI}}(-{j} t)- e^{{j} t\varepsilon}\mathcal{L}_{{\rm EI}}({j} t) \right ] \mathrm{d} t.  
\end{equation}


\end{theorem}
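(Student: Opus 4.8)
The plan is to recognize Theorem~\ref{theor:EI} as a direct application of the Gil-Pelaez inversion theorem, since the Laplace transform $\mathcal{L}_{\rm EI}(s)$ has already been established in \eqref{eq:LEI}. The exposure index ${\rm EI}$ is a non-negative random variable, being the sum of the non-negative uplink and downlink contributions in \eqref{eq:EI}, so its distribution is completely determined by its characteristic function, and the Laplace transform exists and is analytic on the right half-plane $\mathrm{Re}(s)\ge 0$.

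First I would relate the characteristic function of ${\rm EI}$ to its Laplace transform. By definition the characteristic function is $\varphi_{\rm EI}(t)=\mathbb{E}\!\left[e^{{j}t\,{\rm EI}}\right]$, whereas the Laplace transform in \eqref{eq:LEI} is $\mathcal{L}_{\rm EI}(s)=\mathbb{E}\!\left[e^{-s\,{\rm EI}}\right]$. Evaluating the latter along the imaginary axis therefore yields $\varphi_{\rm EI}(t)=\mathcal{L}_{\rm EI}(-{j}t)$ and $\varphi_{\rm EI}(-t)=\mathcal{L}_{\rm EI}({j}t)$. The passage to the imaginary axis is justified because ${\rm EI}\ge 0$ makes $|e^{-s\,{\rm EI}}|\le 1$ for $\mathrm{Re}(s)\ge 0$, so $\mathcal{L}_{\rm EI}$ admits an analytic continuation to the closed right half-plane and in particular to $s=\pm {j}t$.

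Next I would invoke the Gil-Pelaez theorem stated in \eqref{eq:CDF}, which recovers the \ac{CDF} of any random variable from its characteristic function as $F_{\rm EI}(\varepsilon)=\tfrac12-\tfrac{1}{2{j}\pi}\int_0^\infty \tfrac1t\!\left[e^{-{j}t\varepsilon}\varphi_{\rm EI}(t)-e^{{j}t\varepsilon}\varphi_{\rm EI}(-t)\right]\mathrm{d}t$. Substituting $\varphi_{\rm EI}(t)=\mathcal{L}_{\rm EI}(-{j}t)$ and $\varphi_{\rm EI}(-t)=\mathcal{L}_{\rm EI}({j}t)$ reproduces exactly \eqref{eq:FLEI}, completing the argument.

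At the level of the theorem statement itself there is no real obstacle; the substantive work was already done in deriving the conditional transform $\mathcal{L}_{{\rm EI}|x_0}(s)$ in \eqref{eq:LEIx0} (through the independence of the deterministic uplink term, the serving-\ac{BS} fading $H_0$, and the interfering point process handled by the \ac{PPP} probability generating functional) and in de-conditioning over the contact distance in \eqref{eq:LEI}. The only point I expect to require care is the convergence of the Gil-Pelaez integral near $t=0$: the $1/t$ singularity is integrable because the two conjugate bracketed terms cancel at $t=0$, so the integrand vanishes to first order there, while for large $t$ convergence follows from the decay of $\mathcal{L}_{\rm EI}(\pm {j}t)$ inherited from the averaging over the fading and the point process.
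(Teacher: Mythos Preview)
Your proposal is correct and matches the paper's approach: the paper does not give a separate proof for Theorem~\ref{theor:EI} because, having established $\mathcal{L}_{\rm EI}(s)$ in \eqref{eq:LEIx0}--\eqref{eq:LEI}, the \ac{CDF} follows immediately from the Gil-Pelaez inversion already recorded in \eqref{eq:CDF}. Your write-up is in fact more explicit than the paper's, particularly the remarks on the analytic continuation to the imaginary axis and the integrability near $t=0$.
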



\subsection{Optimal \ac{EMF}-aware Design}\label{subsec:optDLUL}
As we mentioned before, increasing the density of \acp{BS} is effective to enhance the coverage performance both in the downlink and uplink and reduce the uplink \ac{EMF} exposure. On the contrary, increasing the density of \acp{BS} has a negative influence on the downlink \ac{EMF} exposure. Therefore, the proper design of ${\lambda}_B$ is essential to minimize both the uplink and downlink exposure. In the following, we formulate an optimization problem to find the optimal density of \acp{BS} that  minimizes $\rm EI$ in \eqref{eq:EI1} (that quantifies the joint downlink\&uplink exposure). Since $\rm EI$ is a random variable, we optimize its $\rho$-th percentile as follows  
\begin{equation}\label{eq:optproblem3}
\begin{split}
{\rm {\mathbf{OP_3}:\,}}
&\underset{{\lambda}_B\in\mathbb{Z}_{+}}{\text{minimize} }\quad F^{-1}_{\rm EI}({\rho, \lambda}_B),
\end{split}
\end{equation}
where $F^{-1}_{\rm EI}$ is the inverse \ac{CDF} of ${\rm EI}$. 
The optimal ${\lambda}_B^\ast$ can be found by any efficient one dimension search algorithm 
{such as bi-sectional~\cite{eiger1984bisection} or golden section methods~\cite{kiefer1953sequential}.}
Based on (\ref{eq:densityB}), ${\lambda}_B^\ast$ is corresponding to a specific baseline density of \acp{BS}, $\lambda_b^\ast$, and a specific radius of the exclusion zone, $R^\ast$.


\section{Numerical Results and Discussion}\label{sec:simu}

In this section, Monte Carlo simulations and numerical results for the analytical expressions are conducted. 
The performance of the \ac{EMF}-aware cellular network is investigated for three scenarios, i.e., the downlink, uplink, and  joint downlink\&uplink.  The parameters used in numerical results and their default values are given in Table.~\ref{tab:simulation}. Some of them are swept to explore their effects on the \ac{EMF} exposure and coverage probability.
\textcolor{black}{Without loss of generality, we consider the Nakagami-$m$ fading with $m=1$.}

\begin{table}[t!]\caption{Table of System Numerical Parameters.}
\centering
\begin{center}
\resizebox{\textwidth}{!}
{
\renewcommand{\arraystretch}{1.2}\small
    \begin{tabular}{ | {c} | {c} || {c} | {c} | }
    \hline
        \hline
    \textbf{System Parameters} & \textbf{Default Values} &\textbf{System Parameters} & \textbf{Default Values}  \\ \hline
    $\lambda_r$; $\lambda_b$  & $10^{-6} ~{\rm{holes/m^2}}$; $10^{-5} ~\rm{BSs/m^2}$ & $\alpha$; $\beta$ & $4$; $2.5$\\ \hline
    $f$ & $2600~\rm MHz$ & $\eta$ & $-40~\rm dB$ \\ \hline
    $G_t$ & $15 ~\rm dB$ & $P_t$ & $200 ~\rm W$ \\ \hline
    $p_{u}$& $0.008~{\rm mW}$& 
    $p_{\max}$ & $200~{\rm mW}$ \\ \hline
    $\epsilon$ & $0.4$ & {$m$} & {$1$}
    \\ \hline
    $u_0$& $20 ~\rm cm$ & $R$ & $50 ~\rm m$\\ \hline
          $\sigma^2$(downlink; uplink) & {$10^{-11}~ \rm W$; $10^{-12} ~\rm W$} & $\rm SAR^{UL}$; $\rm SAR^{DL}$ & $0.0053~\rm \frac{W}{kg}/W$; $0.0042~\rm \frac{W}{kg}/\frac{W}{m^2}$\\ \hline
    \end{tabular}}
\end{center}
\label{tab:simulation}
\end{table}

\subsection{Worst-Case Scenario for Downlink Exposure}\label{sec:simuDL}
The downlink simulations are firstly conducted under the conservative setting, where we consider the maximum transmit power ($200~\rm W$) and 
{the maximum antenna gain ($15~\rm dB$) at 5G \acp{BS}}~\cite[Table 6]{chiaraviglio2021health}. 
{Based on the recommendation provided in 3GPP TR 36.814–900~\cite{access2010further}, as a conservative setting, the path-loss exponent is set as $\alpha=4$ while analyzing the downlink coverage probability (by considering that the communication link is blocked) and as $\beta=2.5$ while computing the downlink \ac{EMF} exposure (for the environment free of blockages).}

\begin{figure}[t!]
\begin{minipage}{.49\textwidth}
    \centering
    \includegraphics[width=1\linewidth]{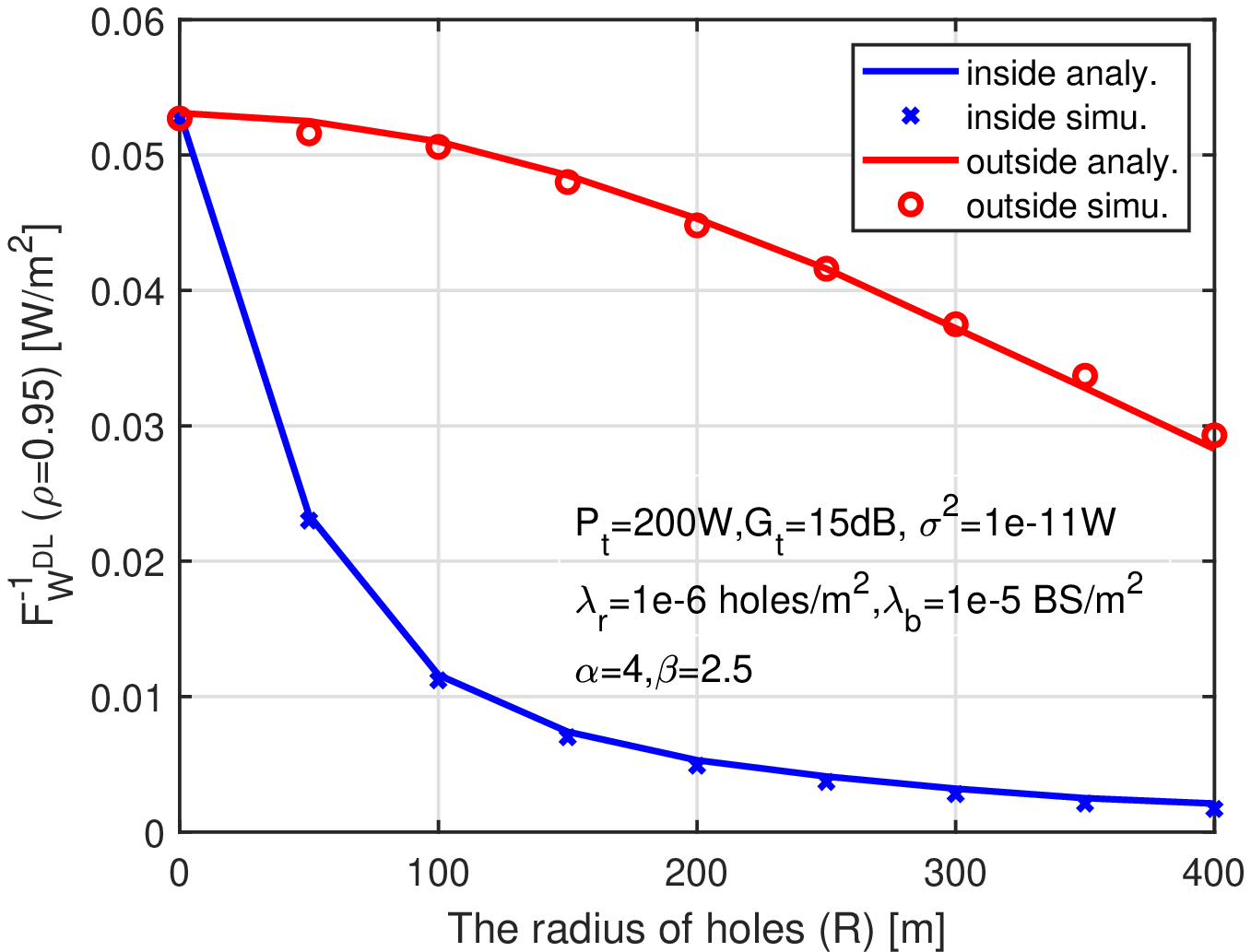}
    \caption{The $95$-th percentile of exposure level in the downlink (i.e., power density induced at the typical user) inside or outside the restricted area for various exclusion zone radii.}
\label{fig:Rholes_EMF_DL}
\end{minipage}
\hfill
\begin{minipage}{.49\textwidth}
    \centering
    \includegraphics[width=1\linewidth]{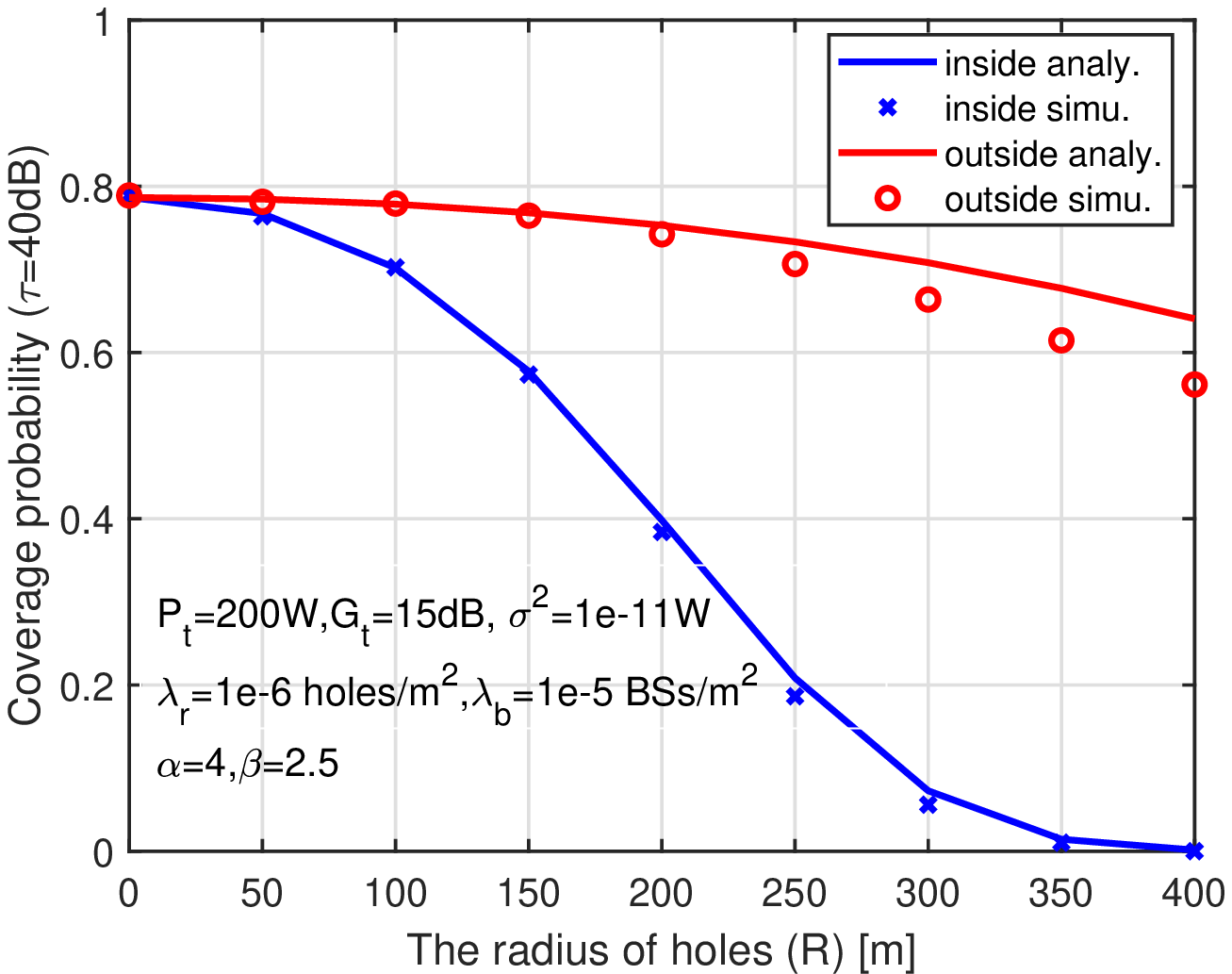}
    \caption{Focusing on $\tau=40\rm dB$, the \ac{CCDF} of the downlink \ac{SNR} levels inside or outside the restricted area for various exclusion zone radii.}
    \label{fig:Rholes_SNR_DL}
    \end{minipage}
\end{figure}
In Fig.~\ref{fig:Rholes_EMF_DL}, we compare the statistical exposure level inside and outside the restricted area for different minimum allowable distance between the \ac{BS} and restricted areas, $R$. In particular, the statistical level, denoted by $F^{-1}_{W^{\rm DL}}(\rho)$, means that the exposure will not exceed this level with a high probability of $\rho$, as explained in (\ref{eq:inverseF}). According to the compliance regulations~\cite{ITU5G:19}, we set $\rho=0.95$, and $F^{-1}_{W^{\rm DL}}(0.95)$ represents the $95$-th percentile of exposure level in the downlink. 
The \ac{EMF} exposure level without considering the minimum distance between \acp{BS} and restricted areas (i.e., the hole radius $R=0~\rm m$) has been studied in~\cite{ICNIRPGuidelines:18,SaudiEMF:21}.
It can be observed from Fig.~\ref{fig:Rholes_EMF_DL} that the regulation on the minimum distance between \acp{BS} and restricted areas greatly affects the measurement of EMF exposure. 
{Obviously, the exposure of the typical user outside the hole is higher than the one inside the restricted area, as the distance between the BS and the typical user in the restricted area is always larger than the hole radius $R$.}
Moreover, Fig.~\ref{fig:Rholes_EMF_DL} illustrates that a rise in the radius of the hole dramatically reduces the exposure for the user inside the hole and still has a slight impact on the user outside the hole. This is because a larger value of $R$ not only leads to the longer propagation distance but slightly decreases the density of \acp{BS} in (\ref{eq:densityB}), resulting in fewer \acp{BS}, namely, fewer radiating sources.

In Fig.~\ref{fig:Rholes_SNR_DL}, the coverage probability is depicted against the hole radius, for the same settings considered in Fig.~\ref{fig:Rholes_EMF_DL}.  
We notice that the typical user outside the restricted areas experiences better  coverage than the user inside. We can also observe the slight reduction in the coverage of the users outside the restricted areas as we increase the hole radius (for similar reasons as explained in our comments on Fig.~\ref{fig:Rholes_EMF_DL}).
By comparing Fig.~\ref{fig:Rholes_EMF_DL} and Fig.~\ref{fig:Rholes_SNR_DL}, it is noticed that the hole with a large radius can protect people from serious \ac{EMF} exposure while impairing communication quality. It is also worth noting that at larger values of $R$, the gap between simulation and analytical results is caused by the \ac{PHP} approximation of the \ac{BS} distribution.

\begin{figure}[t!]
\begin{minipage}{.49\textwidth}
    \centering
    \includegraphics[width=1\linewidth]{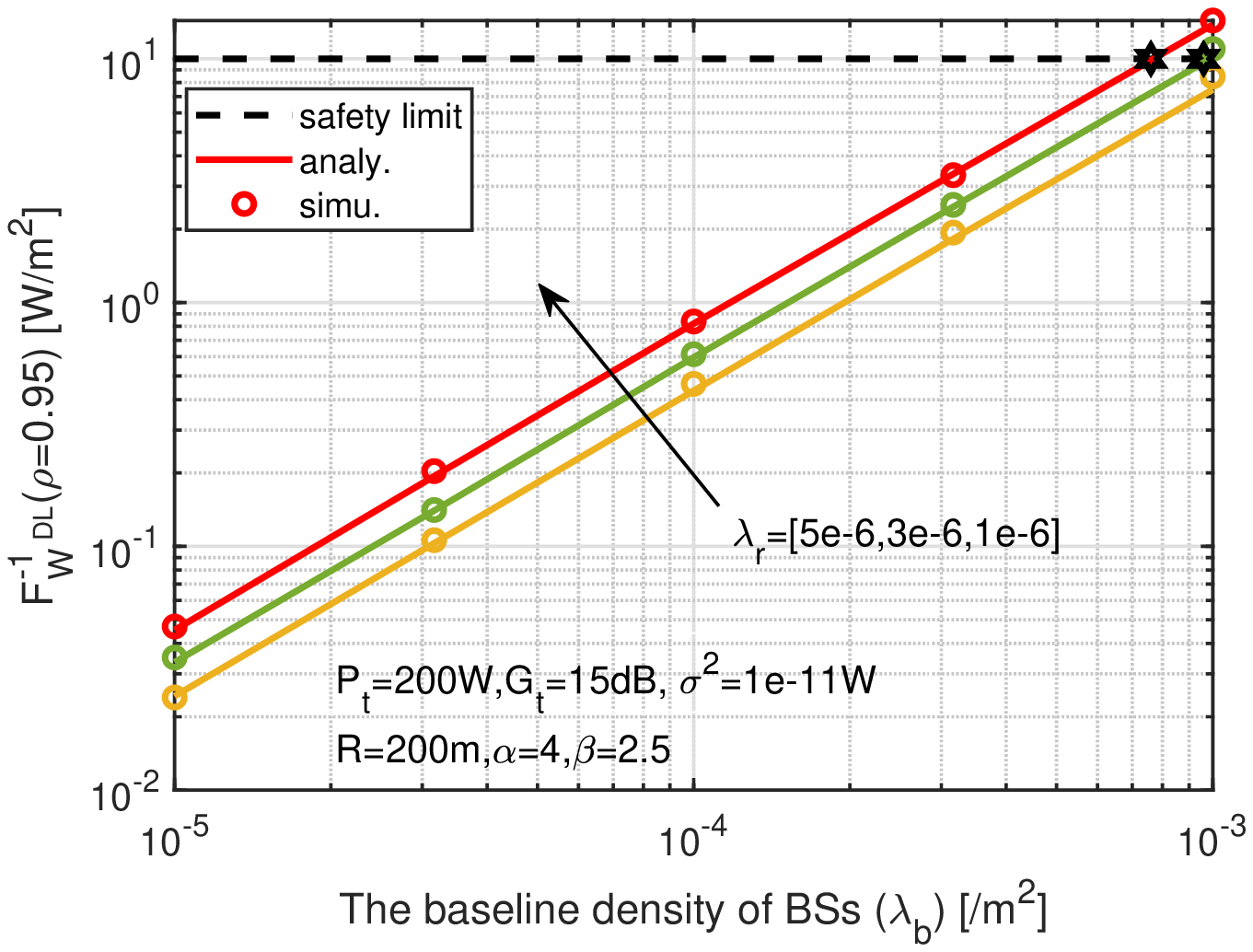}
    \caption{The $95$-th percentile of exposure level in the downlink outside the restricted area vs the baseline density of the \acp{BS}, for various densities of the restricted areas.}
    \label{fig:DL_density_inverse_EMF}
\end{minipage}
\hfill
\begin{minipage}{.49\textwidth}
    \centering
    \includegraphics[width=1\linewidth]{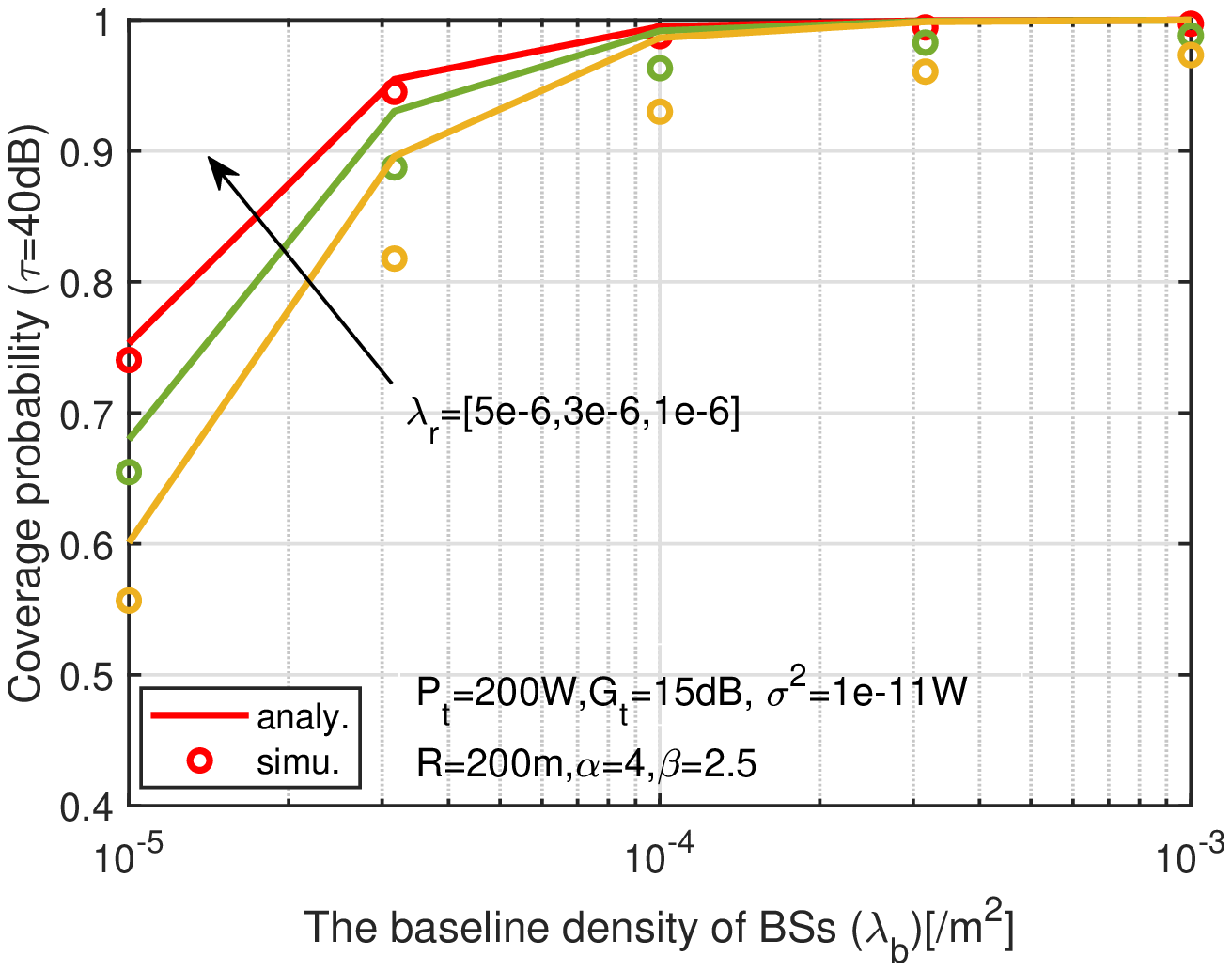}
    \caption{The \ac{CCDF} of the downlink \ac{SNR} levels outside the restricted area vs  baseline densities of \acp{BS}, for various densities of the restricted areas.}
    \label{fig:DL_density_SNR}
    \end{minipage}
\end{figure}
Fig.~\ref{fig:DL_density_inverse_EMF} and Fig.~\ref{fig:DL_density_SNR} show the impact of the density of the restricted areas/holes ($\lambda_r$) and the baseline density of the \acp{BS} ($\lambda_b$) on the \ac{EMF} exposure and coverage probability, respectively, for the users outside the restricted areas. Since the holes ensure a lower exposure level in the restricted areas than that outside the hole, the safety requirement of the user inside the hole can be guaranteed as long as the \ac{EMF} exposure of the user outside the hole is less than the maximum allowable value ($W_{\max}=10 ~\rm W/m^2$, depicted as the dashed horizontal line) defined by the \ac{FCC}~\cite{ITU5G:19}. 
We see that both the exposure and coverage probability decrease for higher values of $\lambda_r$, while densely deployed \acp{BS} favorably affect coverage probability but adversely affect \ac{EMF} exposure in the downlink.
The reason is that with the increasing \ac{BS} density, the serving \ac{BS} may be closer to the typical user, and thus \ac{SNR} would be improved. However, the dense deployment of \acp{BS} would increase the radiating sources, making the typical user more likely to be exposed to higher electromagnetic radiation level. In particular, the maximum density (identified by the star shaped markers in Fig.~\ref{fig:DL_density_inverse_EMF})  corresponds to the maximum allowable exposure. This maximum density is analyzed in the following paragraph.

In Fig.~\ref{fig:Maximum_density}, we depict the maximum allowable baseline density $\lambda_b^*$,  obtained by solving the optimization problem in (\ref{eq:optproblem1}), vs various maximum permitted \ac{EMF} exposure, for different values of $\lambda_r$. For example, considering the \ac{FCC} limit on the power density, $10 ~\rm W/m^2$, the optimal value of $\lambda_b$ maximizes the downlink coverage probability is $7.6\times10^{-4} ~\rm BSs/m^2$, for $\lambda_r=10^{-6}~ \rm holes/m^2$ and $R=200 ~\rm m$. 

\begin{figure}[t!]
\begin{minipage}{.49\textwidth}
    \centering
    \includegraphics[width=1\linewidth]{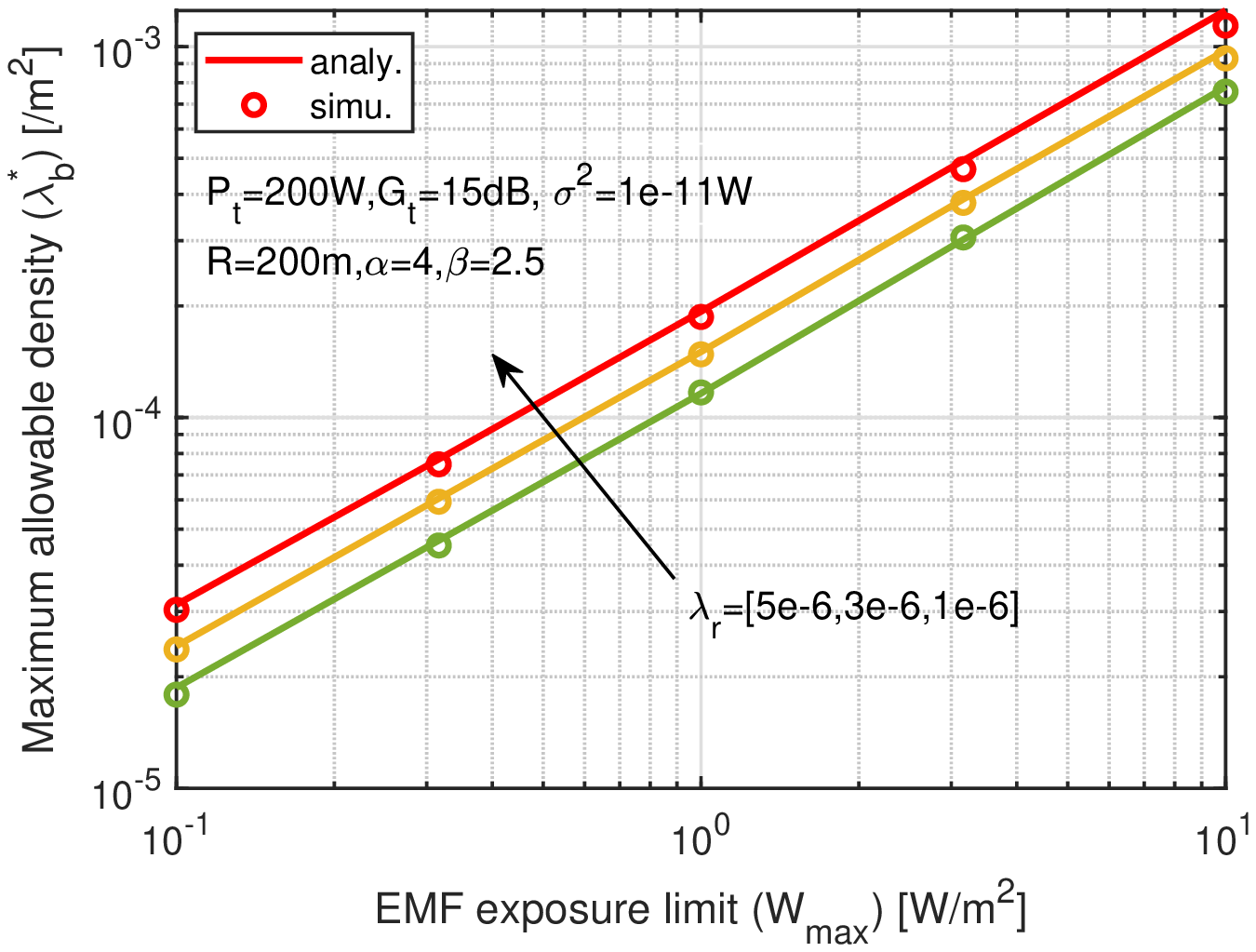}
    \caption{The maximum allowable baseline density $\lambda_b^*$ for different regulations on the maximum allowed \ac{EMF} exposure.}
    \label{fig:Maximum_density}
    \end{minipage}
\hfill
\begin{minipage}{.47\textwidth}
    \centering
    \includegraphics[width=1\linewidth]{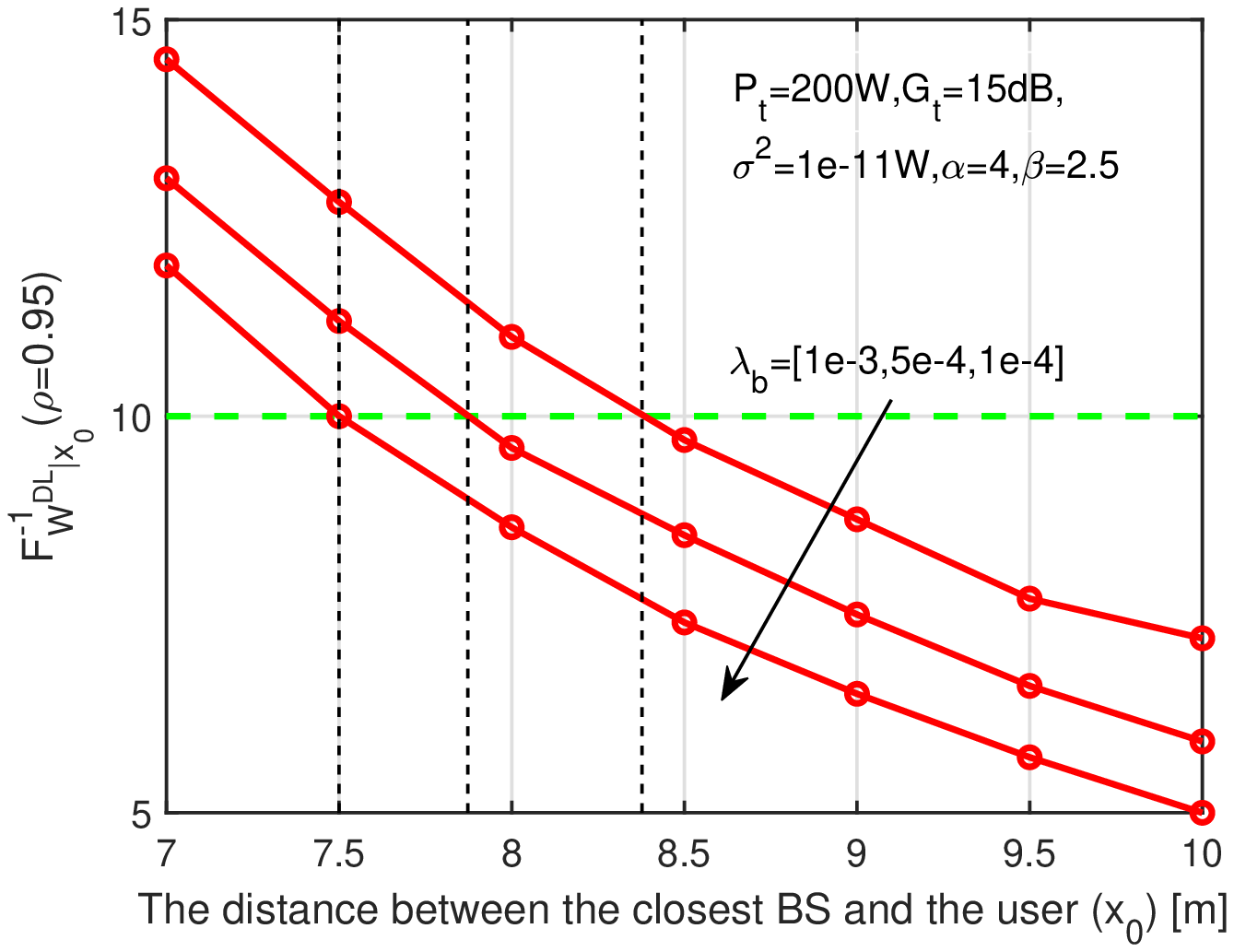}
    \caption{The $95$-th percentile of exposure level in the downlink conditioned on the closest \ac{BS} at $x_0$.}
    \label{fig:r0_EMF_DL}
    \end{minipage}
\end{figure}
In Fig.~\ref{fig:r0_EMF_DL}, we plot the $95$-th percentile of exposure level conditioned on the distance from the typical user to its closest \ac{BS} being at $x_0$ to calculate the compliance distance between users and \acp{BS}.
As mentioned in Sec.~\ref{subsec:Xcom}, compared with the FCC limits (depicted as the dashed horizontal line), the minimum compliance distance between a \ac{BS} and a user, $x_{\rm com}$, can be found. For the cellular network with $\lambda_b=10^{-4} ~\rm BSs/m^2$ and $\lambda_r=10^{-6} ~\rm holes/m^2$, $x_{\rm com}$ is $7.5~\rm m$, which satisfies $F_{W^{\rm DL}|7.5}(10)=0.95$. Namely, the public safety cannot be ensured if they enter an area centered at a \ac{BS} with a radius of $x_{\rm com}$.

\begin{figure}[t!]
\begin{minipage}{.49\textwidth}
    \centering
    \includegraphics[width=1\linewidth]{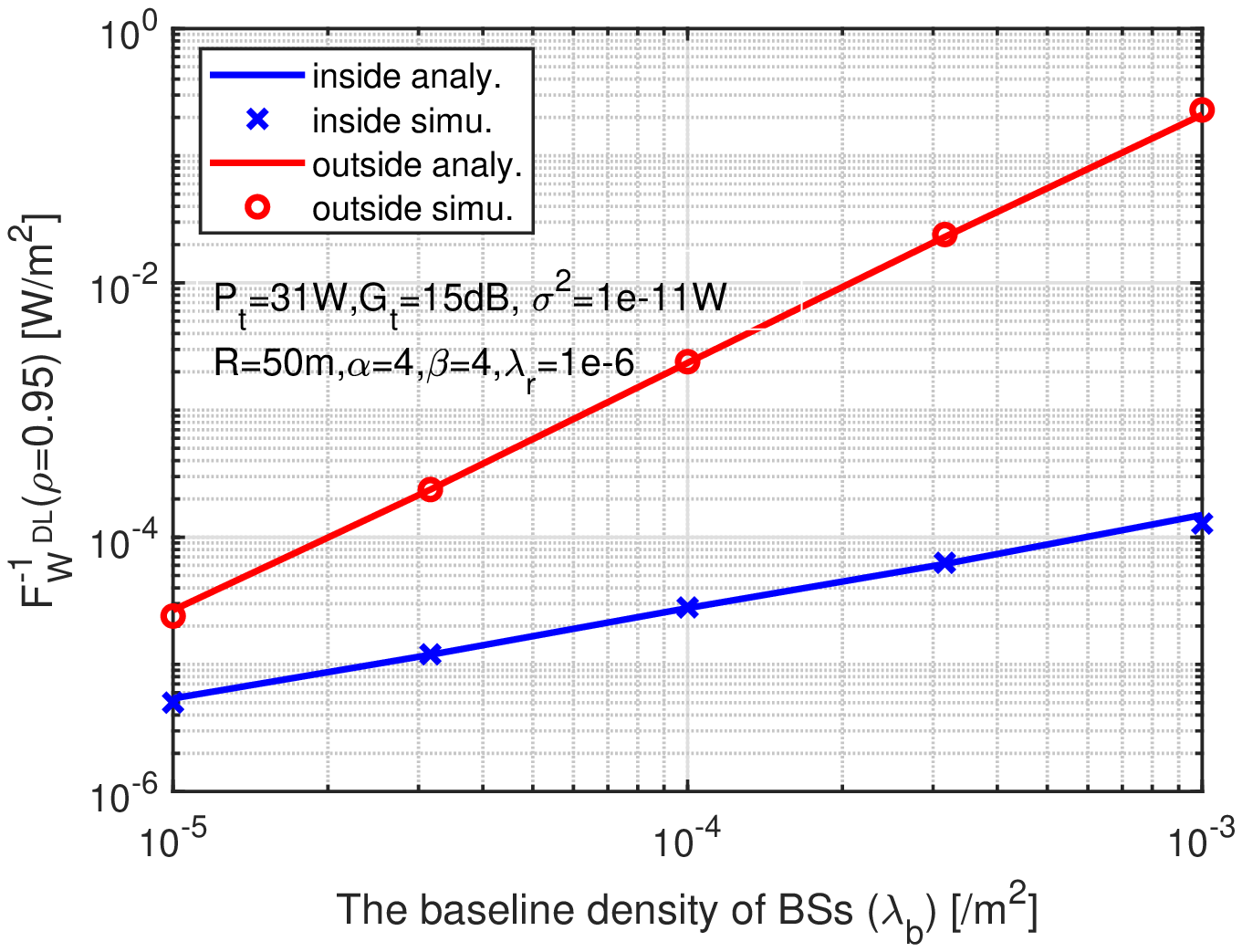}
    \caption{The $95$-th-percentile of exposure level in the downlink under the practical setting.}
    \label{fig:practicalEMF}
    \end{minipage}
\hfill
\begin{minipage}{.49\textwidth}
    \centering
    \includegraphics[width=1\linewidth]{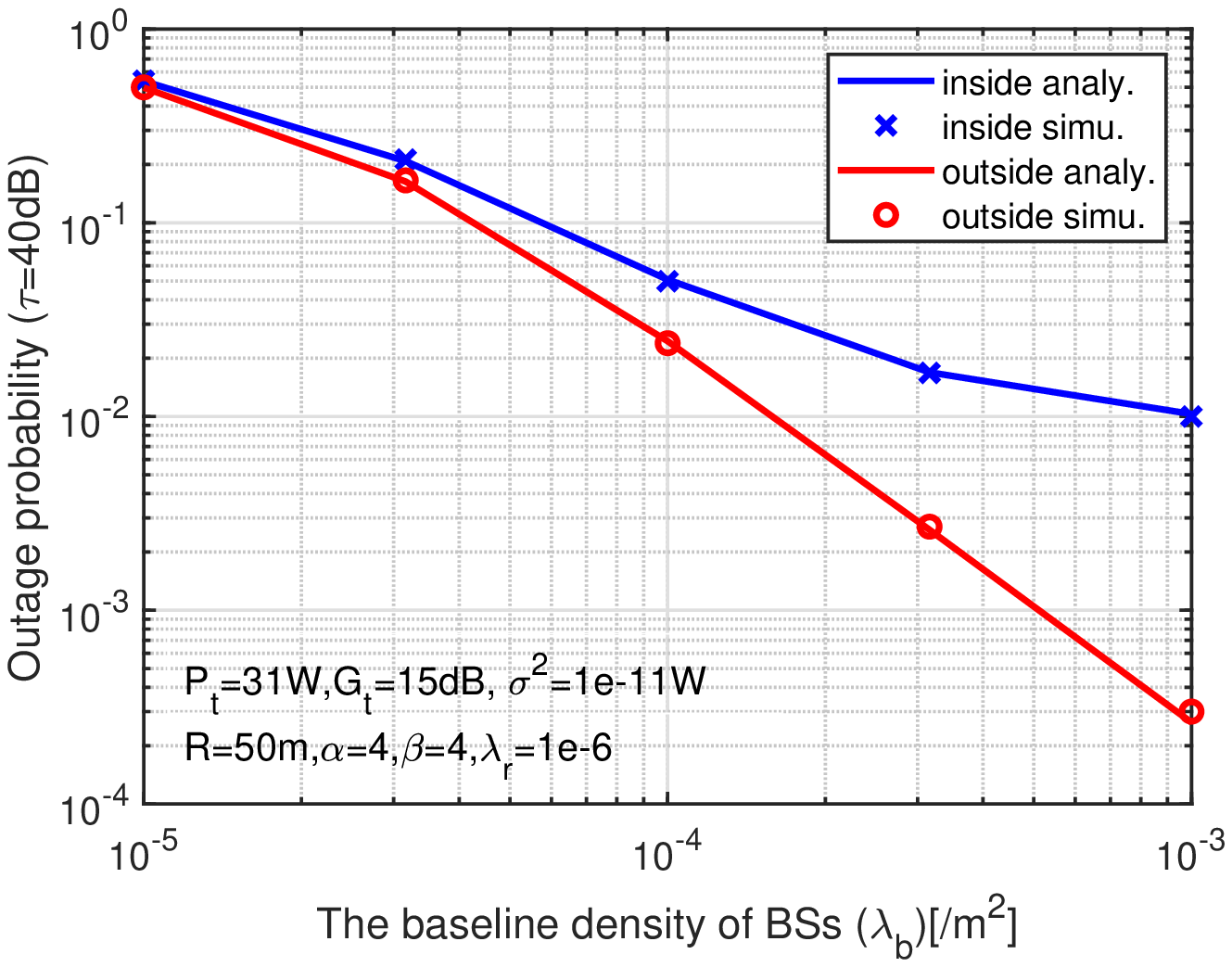}
    \caption{Compared with $\tau=40 \rm dB$, the outage probability in the downlink under the practical setting.}
    \label{fig:practicalOutage}
    \end{minipage}
\end{figure}
\subsection{Typical-Case Scenario for Downlink Exposure}\label{sec:simuDLtypical}
The results in a more common case are given in Fig.~\ref{fig:practicalEMF} and Fig.~\ref{fig:practicalOutage}, {where a statistical reduction factor of $0.31$ is applied to the general downlink transmit power ($100~\rm W$), i.e. $P_t=100~{\rm W}\times0.31$~\cite[Table 6]{chiaraviglio2021health}, and the same path-loss exponent ($\alpha=\beta=4$) is used for evaluating the network performance in the downlink.}  
Fig.~\ref{fig:practicalEMF} suggests that the $95$-th percentile of exposure level in such a typical case is far below the \ac{FCC} limit even in a network with the intensive deployment of \acp{BS}. This observation supports the conception that the development of network densification will not trigger severe health problems. In Fig.~\ref{fig:practicalOutage}, instead of giving the coverage probability, we plot the outage probability, the probability of an outage event when the \ac{SNR} cannot meet the required threshold, and we see that the simulation results closely match the analysis results.

\subsection{Uplink Exposure}\label{sec:simuUL}
%
In the uplink, fractional power control at the \ac{UE} is considered which makes the transmit power a function of the distance to the associated \ac{BS}. The maximum transmit power at the \ac{UE} is set to $200~\rm mW$ (i.e., the maximum transmit power in \ac{5G} mobile equipment\cite[Table 6]{chiaraviglio2021health}). 
{The distance between the user and its personal mobile equipment is assumed to be in far field of the \ac{UE} antenna with $u_0=20~\rm cm$. \footnote{{The far field of the antenna is defined as $u>\frac{2D^2}{c/f}$, where $D$ is the diameter of the antenna and $c/f$ is the wavelength of the uplink frequency $f$. Considering the \ac{LTE} using a frequency band of $2600 ~\rm MHz$,  $D=10~\rm cm$, and $c/f=(3\times 10^8 ~\rm m/s)/(2600 ~\rm MHz)$, we have $u>17.3~\rm cm$. Thus, $u_0=20\rm cm$ satisfies the far-field condition of the \ac{UE}'s antenna~\cite{sar}.}} 
Since the close distance between the user and its personal mobile equipment, the path-loss exponent ($\beta$) between them is typically smaller than that ($\alpha$) between the mobile equipment and its serving \ac{BS}. Thus, we set $\alpha=4$ and $\beta=2.5$.}
Besides, lower noise power is considered since the frequency band in uplink is normally narrower than that in downlink. 

\begin{figure}[t!]
\begin{minipage}{.49\textwidth}
    \centering
    \includegraphics[width=1\linewidth]{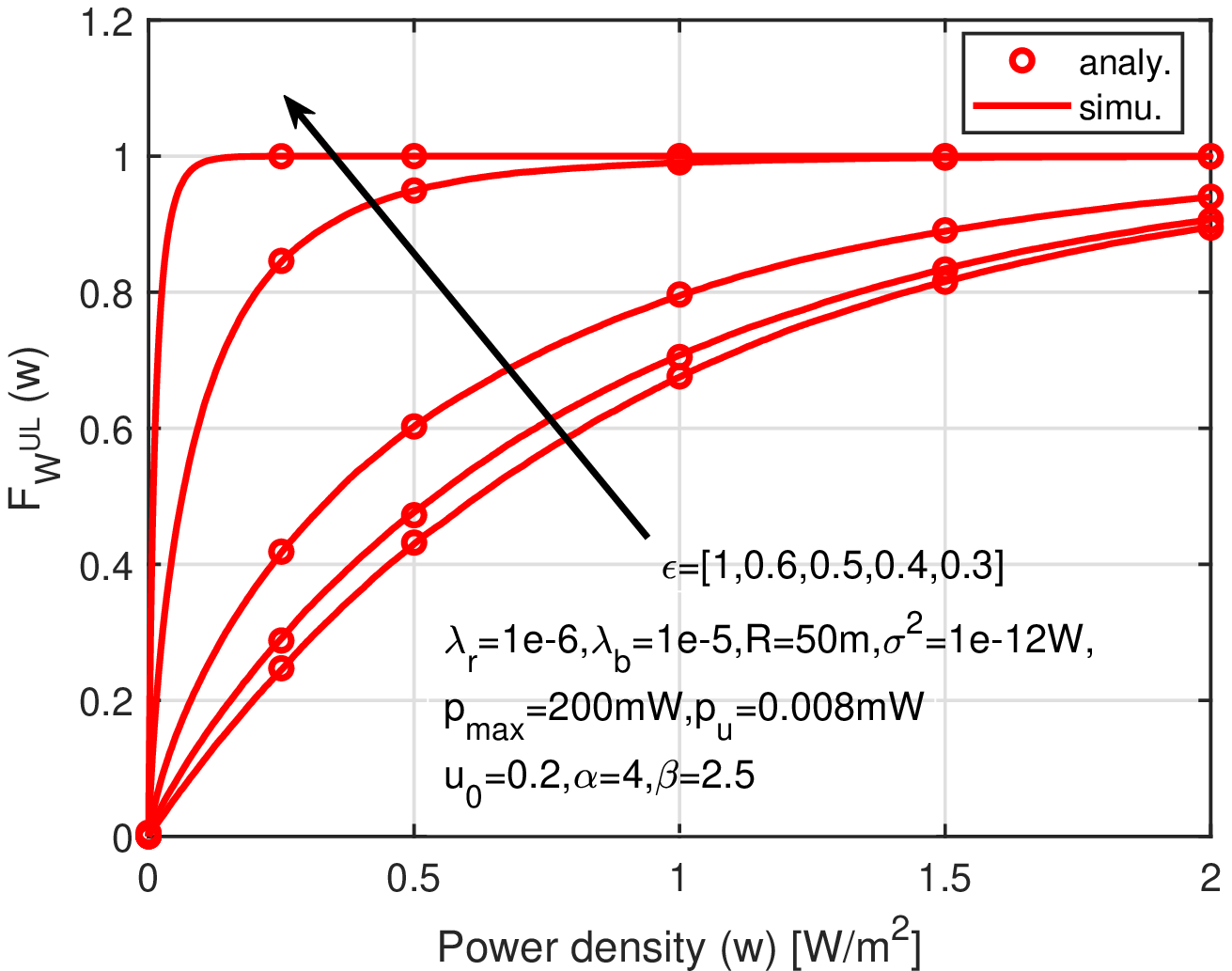}
    \caption{The \ac{CDF} of uplink power density levels outside the restricted area for various power control factors.}
\label{fig:epsilon_EMF_UL}
\end{minipage}
\hfill
\begin{minipage}{.49\textwidth}
    \centering
    \includegraphics[width=1\linewidth]{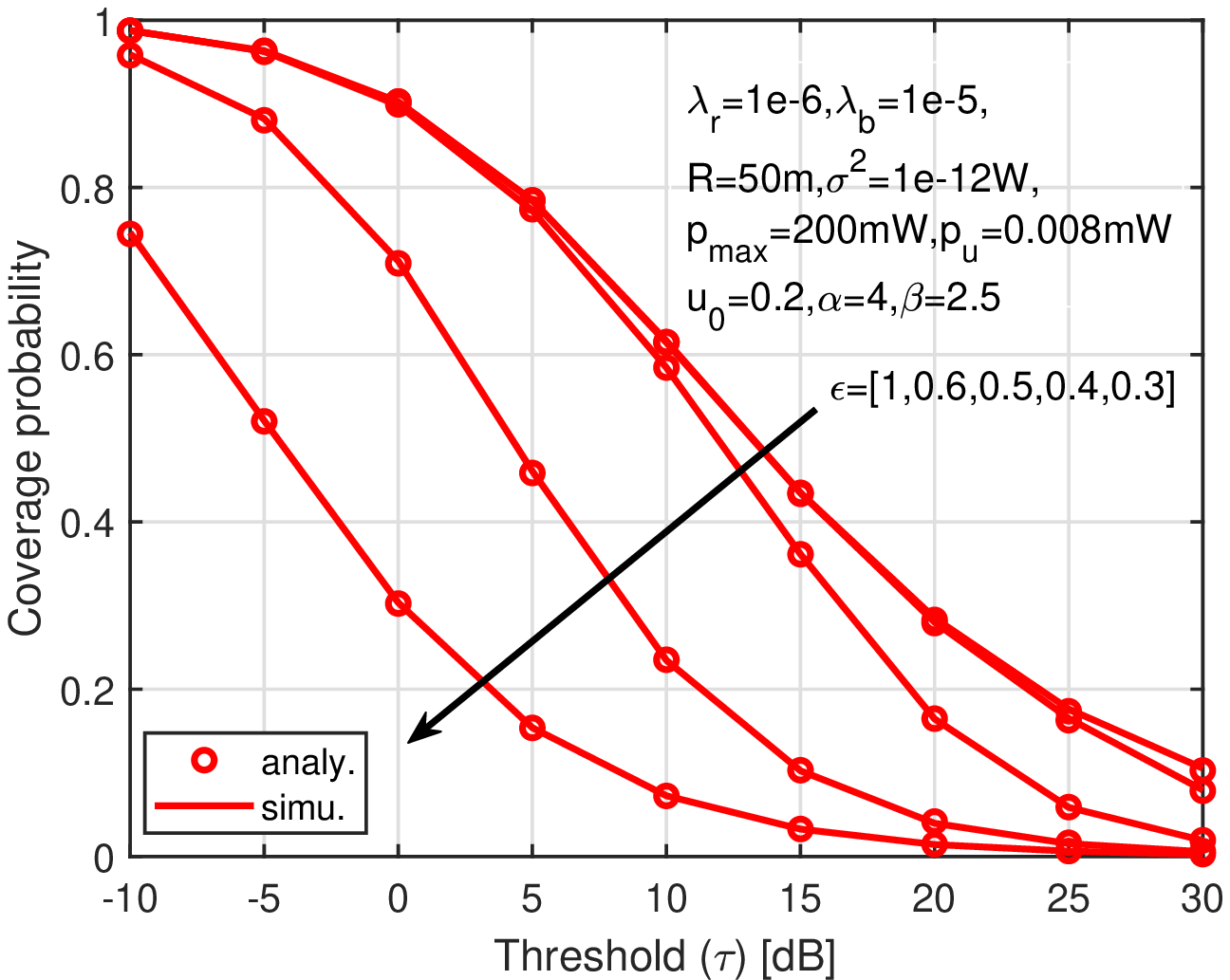}
    \caption{The \ac{CCDF} of the uplink \ac{SNR} levels outside the restricted area for various power control factors.}
    \label{fig:epsilon_SNR_UL}
    \end{minipage}
\end{figure}
The impact of power control factor $\epsilon$ on the distribution of uplink \ac{EMF} exposure and \ac{SNR} is presented in Fig.~\ref{fig:epsilon_EMF_UL} and Fig.~\ref{fig:epsilon_SNR_UL}, respectively. It can be seen from Fig.~\ref{fig:epsilon_EMF_UL} that increasing the power control factor $\epsilon$ leads to more severe exposure in uplink. This is because the transmit power of \ac{UE} with the distance-proportional power control increases with $\epsilon$ , as shown in (\ref{eq:PtranUL}).
Similarly, in Fig.~\ref{fig:epsilon_SNR_UL}, the larger the power control factor, the higher power the serving \ac{BS} receives, thereby enhancing the coverage performance.
Intuitively, as we increase $\epsilon$ up to $1$, the transmit power can completely compensate the path loss in the \ac{BS}-\ac{UE} link and the received power will equal to $p_{u}$ if the distance between a \ac{BS} and its serving user ($x_0$) is below $X_{\max}$ (as described in Sec.~\ref{sec:probelmUL}). 
However, if $x_0$ exceeds $X_{\max}$, then $P^{\rm UL}_{\rm tran}$ reaches its maximum value of $200~\rm mW$, which explains the small difference in Fig.~\ref{fig:epsilon_EMF_UL} and Fig.~\ref{fig:epsilon_SNR_UL} when changing $\epsilon$ from $0.6$ to $\epsilon=1$.


\begin{figure}[t!]
\begin{minipage}{.49\textwidth}
    \centering
    \includegraphics[width=1\linewidth]{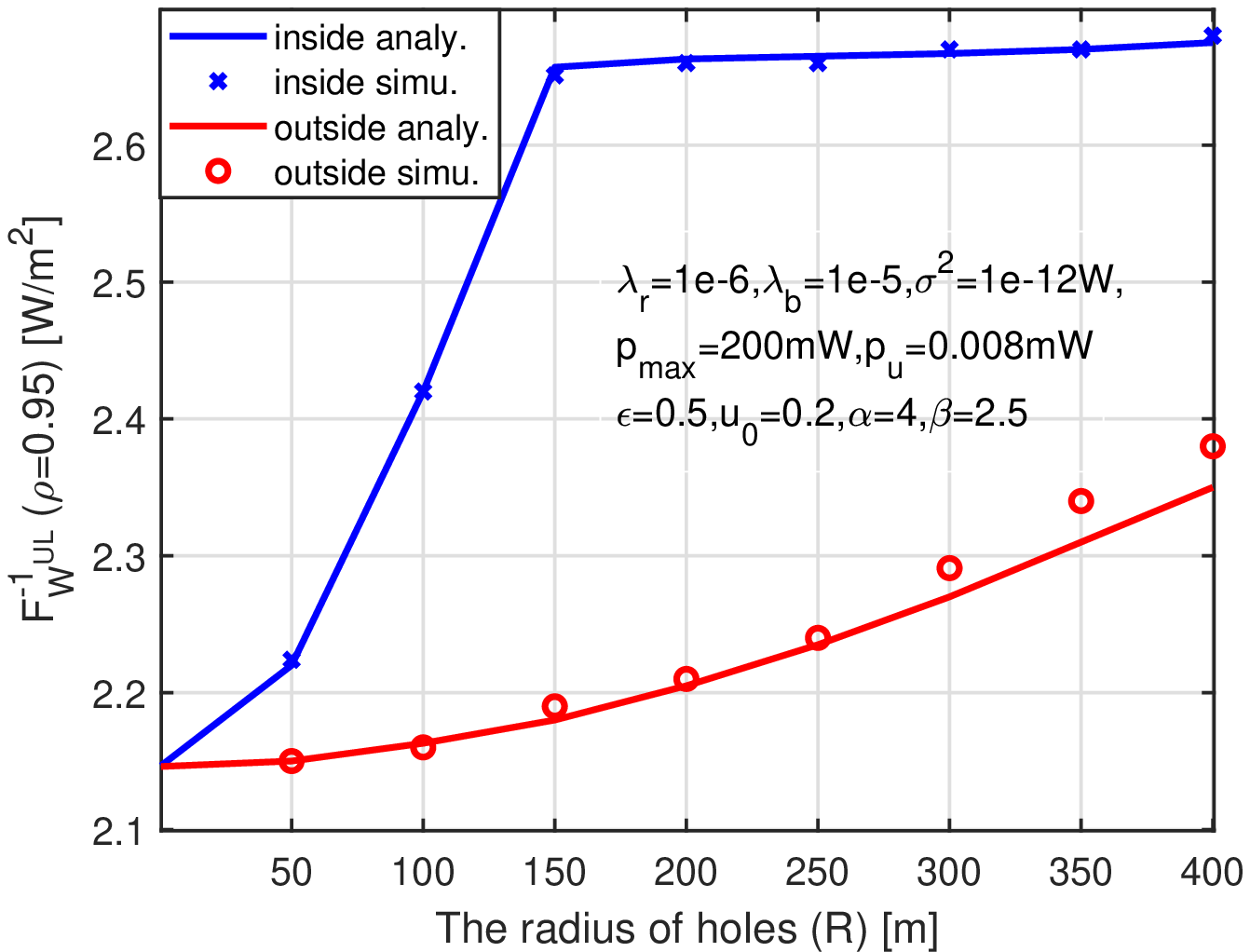}
    \caption{The $95$-th-percentile of exposure level in the uplink inside or outside the restricted area for various exclusion zone radii.}
\label{fig:Rholes_EMF_UL}
\end{minipage}
\hfill
\begin{minipage}{.49\textwidth}
    \centering
    \includegraphics[width=1\linewidth]{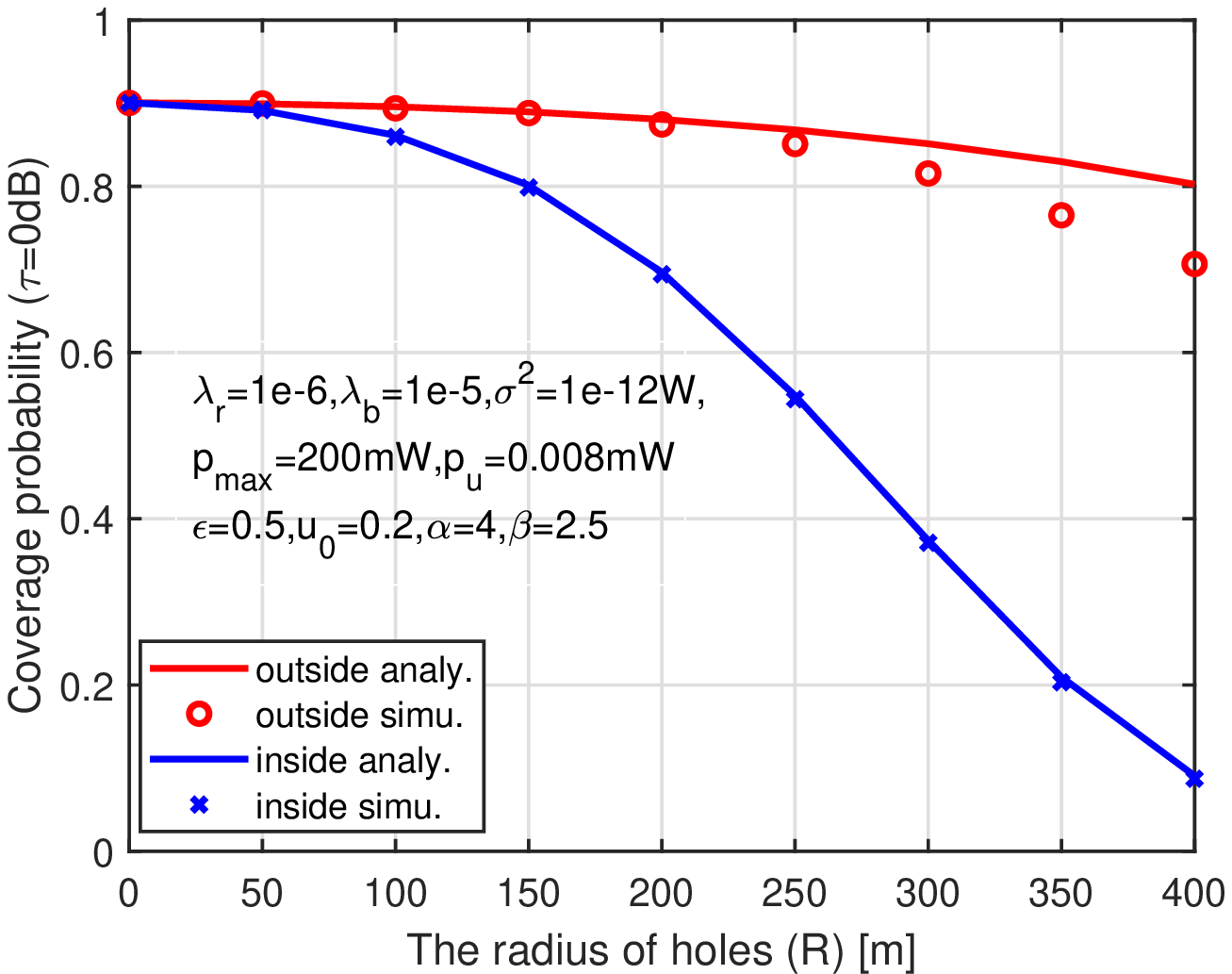}
    \caption{The \ac{CCDF} of the uplink \ac{SNR} levels inside or outside the restricted area for various exclusion zone radii.}
    \label{fig:Rholes_SNR_UL}
    \end{minipage}
\end{figure}
In Fig.~\ref{fig:Rholes_EMF_UL} and Fig.~\ref{fig:Rholes_SNR_UL}, we explore the effect of hole radius $R$ on the exposure level and the coverage probability in the uplink.
Different from simulation results on the downlink exposure in Sec.~\ref{sec:simuDL}, in Fig.~\ref{fig:Rholes_EMF_UL}, the user inside the hole is exposed to higher-level \ac{RFR} emitted from personal mobile equipment than the user outside the hole. Interestingly, we notice that increasing the zone radius $R$ does not mitigate \ac{RFR} for both two kinds of users (inside and outside the holes). 
The reason for this anomaly is that we control the transmit power at \ac{UE}, which is monotonically increasing with the distance between the user and its serving \ac{BS} as in (\ref{eq:PtranUL}). 
Particularly, for the user outside the hole, when we expand $R$, the density of \ac{PHP}-distributed \acp{BS}, $\lambda_B$, will decrease. The reduced number of \acp{BS} has no effect on uplink radiating sources since the uplink exposure is induced by an individual's mobile device. However, the distance between the closest \ac{BS} to the typical user may become farther, resulting in stronger transmit power at \ac{UE} and exposure to the user. 
We also notice that the uplink exposure level of the typical user inside the hole gradually tends to be constant when $R>150 ~\rm m$, which is the consequence of maximum transmit power constraint.
On the other hand, the uplink coverage probability decreases as we increase the hole radius, as shown in Fig.~\ref{fig:Rholes_SNR_UL}.  This trend is similar to the downlink scenario observed in Fig.~\ref{fig:Rholes_SNR_DL}.

%
\begin{figure}[t!]
\begin{minipage}{.49\textwidth}
    \centering
    \includegraphics[width=1\linewidth]{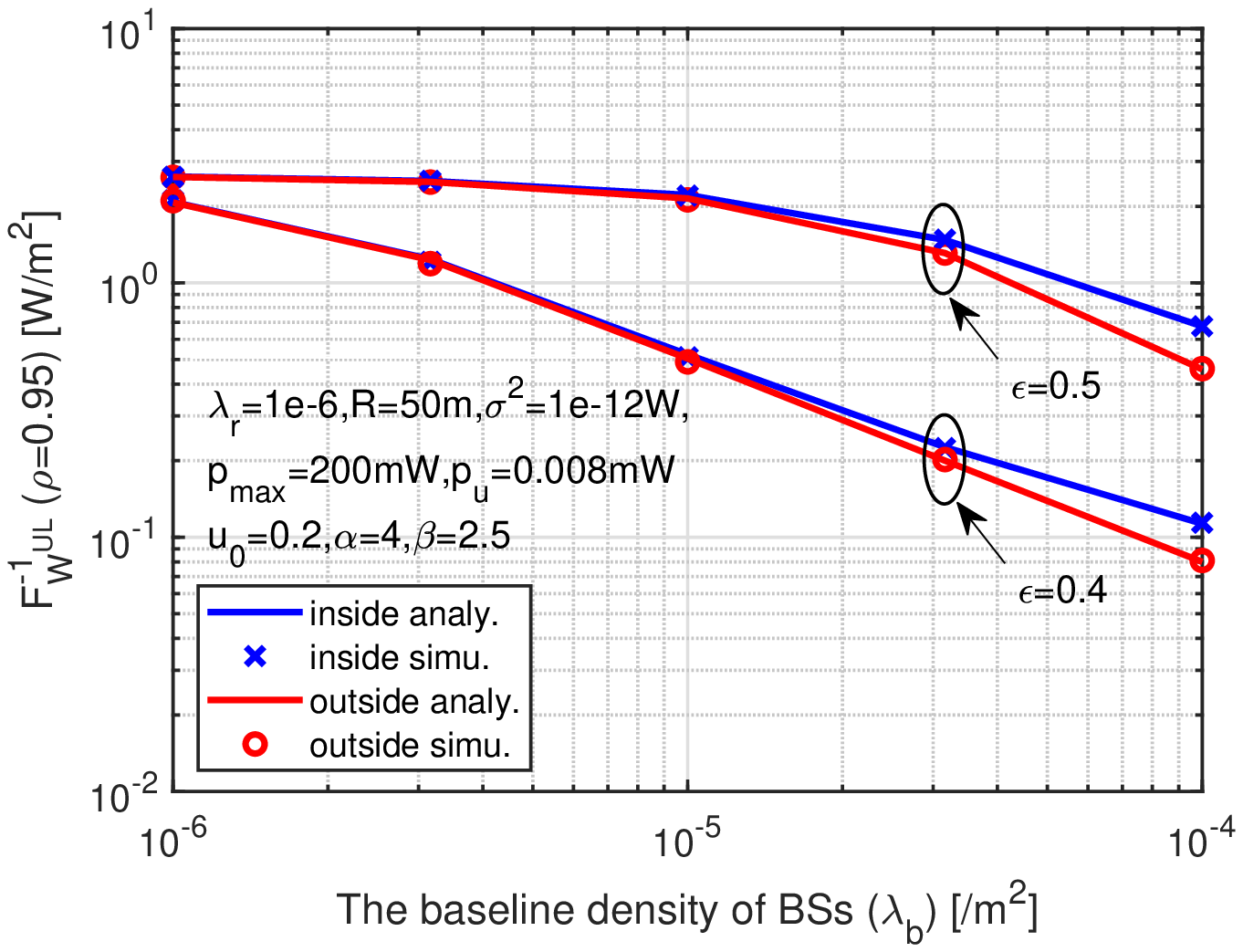}
    \caption{The $95$-th percentile of exposure level in the uplink inside or outside the restricted area vs various baseline densities of \acp{BS} for different values of power control factor.}
\label{fig:UL_density_inverse_EMF}
\end{minipage}
\hfill
\begin{minipage}{.49\textwidth}
    \centering
    \includegraphics[width=1\linewidth]{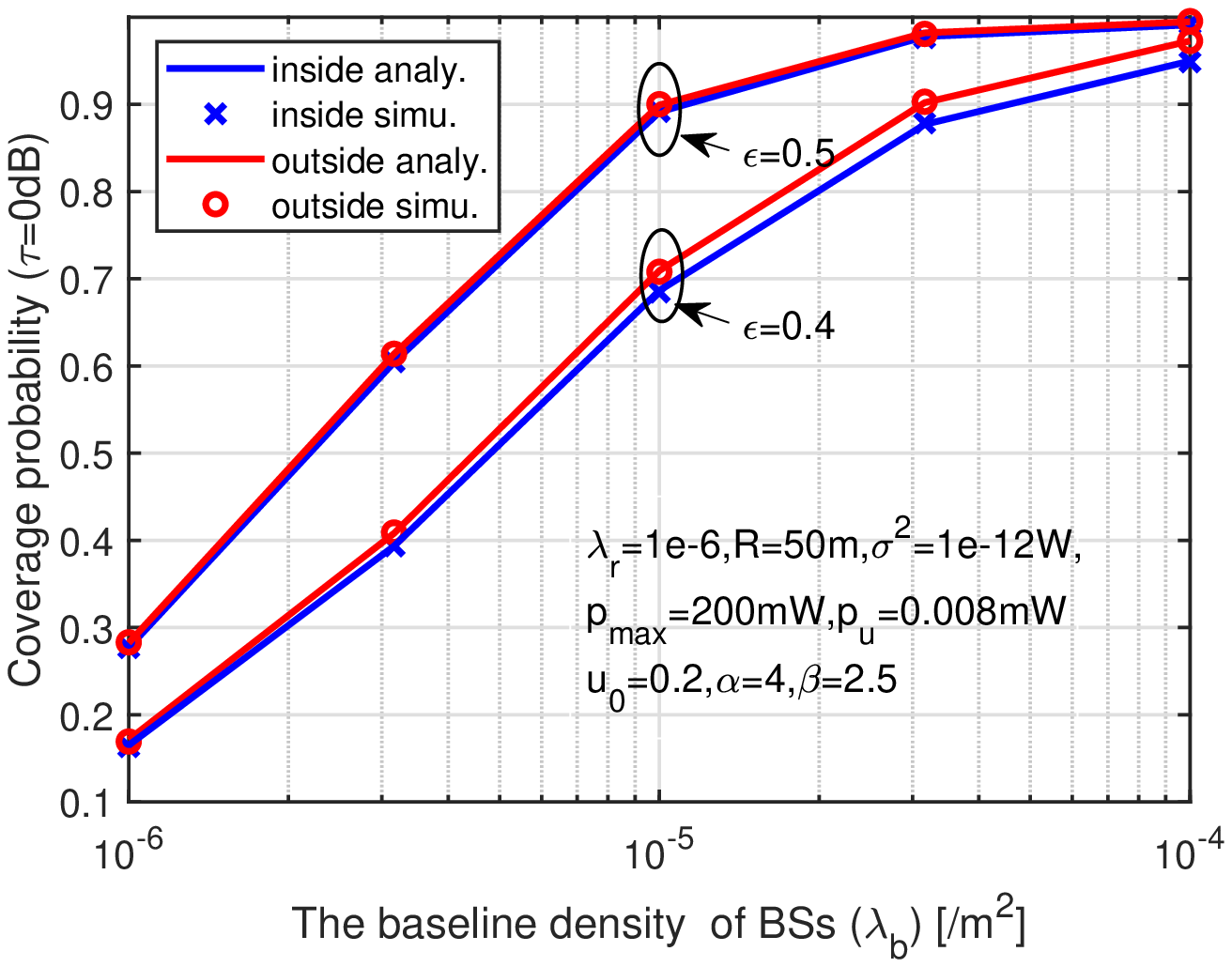}
    \caption{The \ac{CCDF} of the uplink \ac{SNR} levels inside or outside the restricted area vs various baseline densities of \acp{BS} for different values of power control factor.}
    \label{fig:UL_density_SNR}
    \end{minipage}
\end{figure}
Next, we discuss the influence of $\lambda_b$ on the network performance metrics in Fig.~\ref{fig:UL_density_inverse_EMF} and Fig.~\ref{fig:UL_density_SNR}. 
Increasing the value of $\lambda_b$ leads to a decrease in the $95$-th percentile of \ac{EMF} exposure (as shown in Fig.~\ref{fig:UL_density_inverse_EMF}) and an increase in coverage probability (as shown in Fig.~\ref{fig:UL_density_SNR}) for the typical user inside and outside the restricted area.
In fact, after increasing the baseline density $\lambda_b$, there are more \acp{BS} around the typical user, which has a potential to reduce the distance ($x_0$) between the typical user and its serving \ac{BS}. Meanwhile, the shorter distance $x_0$ leads to lower transmit power in (\ref{eq:PtranUL}) and lower \ac{EMF} exposure levels in (\ref{metrics:WUL}).
The improvement in coverage probability and the mitigation in \ac{EMF} exposure in uplink reveals that dense deployment of \acp{BS} is conducive to the future cellular network design. 


\subsection{Exposure Index (Joint Downlink\&Uplink Exposure)}
%
Generally, the \ac{SAR} value of voice usage is larger than that of other usage, such as data downloading/uploading. Considering such a worst case, we choose $\rm SAR^{UL}=0.0053$ and $\rm SAR^{DL}=0.0042$~\cite[Table 27]{sar}.

\begin{figure}[t!]
\begin{minipage}{.49\textwidth}
    \centering
    \includegraphics[width=1\linewidth]{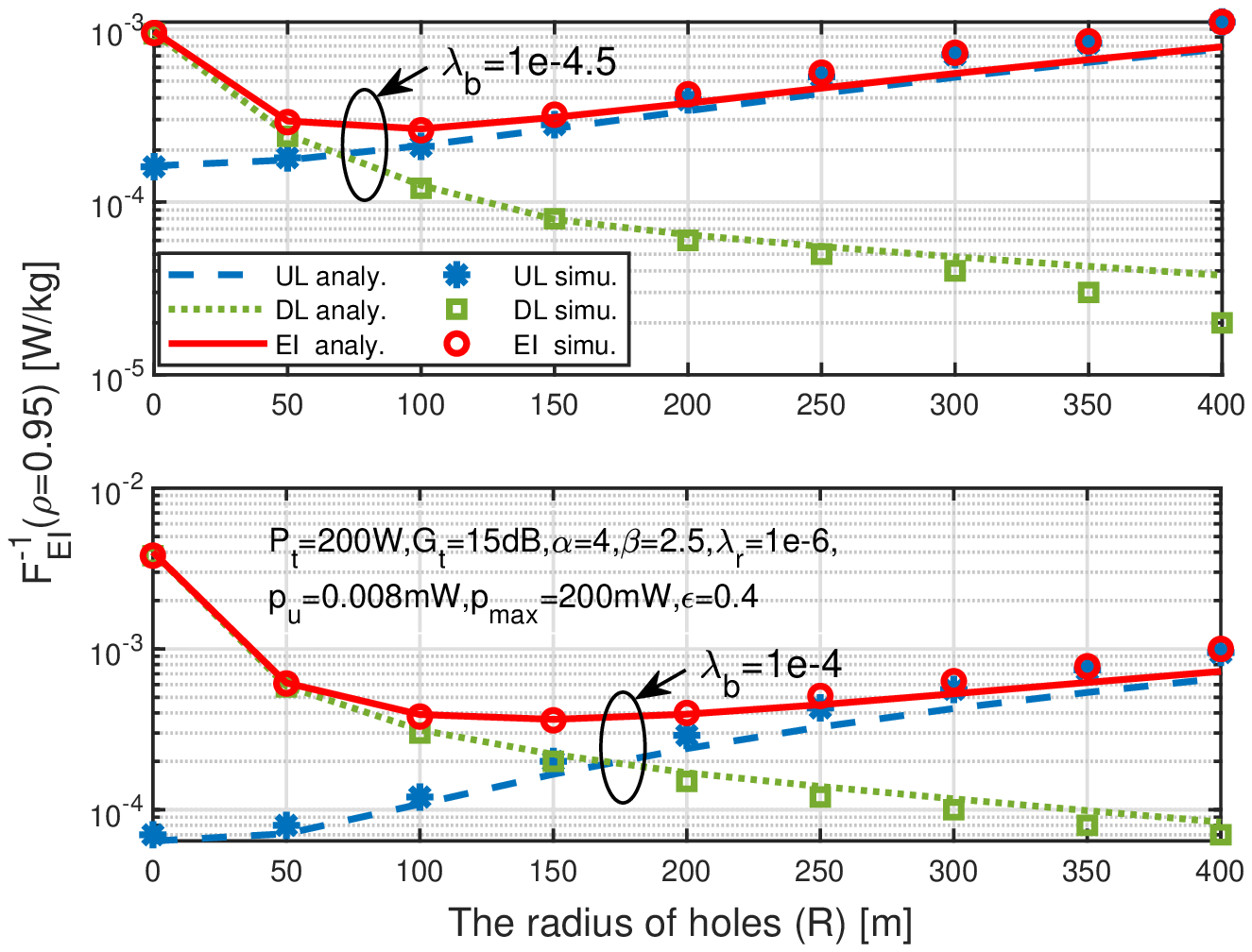}
    \caption{The $95$-th percentile of exposure level of $\rm EI$, $\rm EI^{UL}$ and $\rm EI^{DL}$ inside the restricted area vs various exclusion zone radii for different $\lambda_b$.}
    \label{fig:Rholes_EI}
    \end{minipage}
\hfill
\begin{minipage}{.49\textwidth}
    \centering
    \includegraphics[width=1\linewidth]{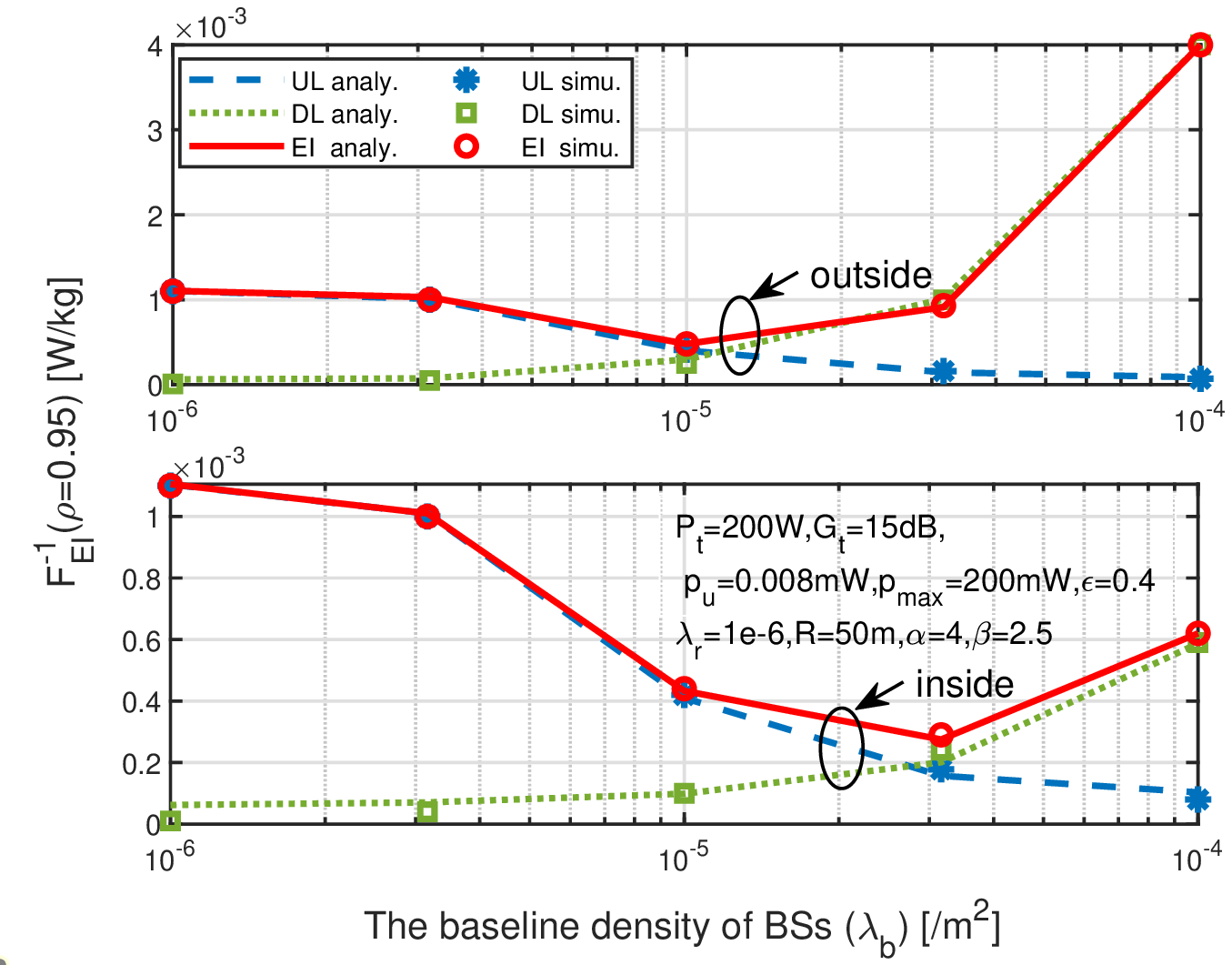}
    \caption{The $95$-th percentile of exposure level of  $\rm EI$, $\rm EI^{UL}$ and $\rm EI^{DL}$ inside or outside the restricted area vs various baseline densities of \acp{BS}.}
\label{fig:density_EI}
\end{minipage}
\end{figure}

Fig.~\ref{fig:Rholes_EI} reveals the impact of hole radius on the joint downlink\&uplink exposure.
It can be concluded from Fig.~\ref{fig:Rholes_EMF_DL} and Fig.~\ref{fig:Rholes_EMF_UL} that the existence of holes around the restricted areas cannot mitigate the exposure from \ac{UE} but it is effective for reducing the exposure from \acp{BS}.
The contradicting trend (between the increase in uplink exposure and the decrease in downlink exposure when expanding the exclusion zone radius)
reminds us that we cannot blindly protect the users by removing \acp{BS} near the restricted areas, which also causes both uplink and downlink coverage performance degradation as can be seen in Fig.~\ref{fig:Rholes_SNR_DL} and Fig.~\ref{fig:Rholes_SNR_UL}.
Therefore, Fig.~\ref{fig:Rholes_EI} considers joint downlink and uplink exposure as $\rm EI$ when resizing the hole radius.
In Fig.~\ref{fig:Rholes_EI}, when $R< 100~\rm m$, the total exposure is mainly from the \acp{BS} in downlink but when $R\ge 100~\rm m$, downlink exposure is gradually decreasing and uplink exposure becomes dominant. Namely, there exists an optimal value, e.g., $R^*=100~\rm m$, that minimizes the total exposure for the network with $\lambda_b=10^{-4.5} ~\rm BSs/m^2$.

In Fig.~\ref{fig:density_EI}, we plot the $95$-th percentile of $\rm EI$ for the typical user inside and outside the restricted area under different baseline densities of \acp{BS}.
The dense deployment of \acp{BS} can improve both uplink and downlink coverage probability as shown in Fig.~\ref{fig:DL_density_SNR} and Fig.~\ref{fig:UL_density_SNR}. Nevertheless, the uplink and downlink exposure levels show opposite trends with the increase of the baseline density $\lambda_b$, as shown in Fig.~\ref{fig:DL_density_inverse_EMF} and Fig.~\ref{fig:UL_density_inverse_EMF}, respectively. These imply that there is an optimal value of $\lambda_b$ that minimizes the $\rm EI$, i.e., a solution to the optimization problem  in \eqref{eq:optproblem3}.
For the typical user inside the hole, it can be observed from Fig.~\ref{fig:density_EI} that there is a turning point $\lambda_b^*=10^{-4.5} ~\rm BSs/m^2$. Before this point, $\rm EI$ is dominated by $\rm EI^{UL}$, and after it, $\rm EI$  is dominated by $\rm EI^{DL}$. At this point, the joint exposure, $\rm EI$, at the restricted area is minimized. For the typical user outside the hole, the optimal baseline density, $\lambda_b^*$, corresponding to the minimum $\rm EI$ is $10^{-5} ~\rm BSs/m^2$.



\section{Conclusion}
\label{sec:conc}

This paper integrated the EMF restrictions on the coverage performance and exposure analysis and formulated optimization problems on how to design the EMF-aware cellular networks.
Particularly, the distribution of \acp{BS} was generated by a \ac{PHP}, accounting for the distance between BSs and restricted areas where the presence of \acp{BS} is prohibited. 
Using tools of stochastic geometry, we analyzed the radiation and coverage probability in terms of downlink and uplink. Furthermore, we investigated the effect of system parameters on the joint downlink\&uplink radiation from both BSs and UE through {$\rm EI$}.
\textcolor{black}{
With the aid of numerical results, 
we showed that even the conservative evaluation of the $95$-th percentile of \ac{EMF} exposure level can still comply with the international guidelines, and the exposure in more typical settings is far below the maximum permissible level.
It can also be seen that increasing the baseline density of \acp{BS} or decreasing the permitted distance around restricted areas can reduce the exposure from mobile equipment in uplink while exacerbating the exposure from \acp{BS} in downlink. Such opposite trend demonstrated the reasonability of taking joint downlink\&uplink exposure into account when designing the system parameters for the \ac{EMF}-aware cellular network. We found that there exists optimal values of the distance between restricted areas and BSs and the baseline density of BSs that minimizes the total exposure under a certain network configuration.
}


%


%

\appendices
\section{Proof of Lemma~\ref{lemma:ClosestDL}}\label{app:lemma1}
The \ac{PHP} $\Psi_B$ can be approximated as a \ac{PPP} with density $\lambda_{B}={\lambda_{b}} \exp(-\lambda_{r}R^2)$.
Following the standard result of \ac{PPP}~\cite{stochasticgeometry}, the Lebesgue measure of the area centered at the typical user outside the hole can be expressed as
$\rho_{\rm out}(x)=\pi x^2,x \ge 0$.
%
%
Using the null probability of \ac{PPP} in~\cite{nullprob}, the \ac{CDF} of $X_{\rm out}$ is given by
\begin{equation}\label{eq:FXout}
\begin{split}
F_{X_{\rm out}}(x)
&=\mathbb{P}\left \{ X_{\rm out}\le x \right \} 
=1-\mathbb{P}\left \{ X_{\rm out} > x \right \}
=1-\exp\left [ \lambda_{B} \rho_{\rm out}(x) \right ]\\
&=1-\exp (-\lambda_{B}\pi x^2) 
=1-\exp \left [ -{\lambda_{b}} \exp(-\lambda_{r}R^2)\pi x^2 \right ].  
\end{split}
\end{equation}
Because of the relationship between \ac{PDF} and \ac{CDF}, i.e., $f_{X_{\rm out}}(x)=\frac{\mathrm{d} }{\mathrm{d} x} F_{X_{\rm out}}(x)$, we can obtain the \ac{PDF} of $X_{\rm out}$, $f_{X_{\rm out}}(x)$, in \eqref{eq:fXout}. 
For the typical user inside the hole, the Lebesgue measure of the area is given by
$\rho_{\rm in}(x)=\pi (x^2-R^2),x \ge R$,
%
%
where $R$ is the radius of the holes. Then, using the same steps of deriving $f_{X_{\rm out}}(x)$, 
we finish the proof of $f_{X_{\rm in}}(x)$.

\section{Proof of Theorem~\ref{theor:covDL}}\label{app:theorem1}
We start by analyzing the downlink coverage probability of the typical user outside the hole. With the distribution of $X_{\rm out}$ in (\ref{eq:fXout}), (\ref{metrics:snrDL}) and (\ref{metrics:covDL}) can be further processed as 
\begin{equation}\label{eq:covout1}
\begin{split}
&\mathcal{P}^{\rm DL}_{\rm cov}=\mathbb{P}\left \{  {\rm SNR^{DL}}>\tau \right \}
 =\mathbb{E}_{X_{\rm out}}\left [ \mathbb{P}\left \{  {\rm SNR^{DL}}>\tau | X_{\rm out}\right \} \right ] \\
&=\int_{0}^{\infty} \mathbb{P}\left \{ {\rm SNR}>\tau | X_{\rm out}=x_0\right \}f_{X_{\rm out}}(x_0)\,\mathrm{d}x_0
=\int_{0}^{\infty} \mathbb{P}\left \{ \frac{p \eta  H_0}{x_0^{\alpha}\sigma^2}>\tau \right \}f_{X_{\rm out}}(x_0)\,\mathrm{d}x_0.
\end{split}
\end{equation}
From the distribution of $H_0$ in \eqref{eq:ssfading}, we have 
\textcolor{black}{
\begin{equation}\label{eq:covout2}
\begin{split}
\mathbb{P}\left \{ \frac{p \eta H_0}{ x_0^{\alpha}\sigma^2}>\tau \right \}
&=\mathbb{P}\left \{ H_0>\frac{\tau \sigma^2}{p \eta x_0^{-\alpha}} \right \} 
=\frac{\Gamma_u\left ( m,s^{\rm DL}\sigma^2\right )}{\Gamma\left ( m \right ) }
\overset{(a)}{=} \sum_{k=0}^{m-1} \frac{(s^{\rm DL}\sigma^{2})^k}{k!}\exp\left ( -s^{\rm DL}\sigma^2\right ),
\end{split}
\end{equation}
}
\textcolor{black}{where $\Gamma_u\left ( m,mg \right )=\int_{mg}^{\infty} t^{m-1}e^{-t}\,\mathrm{d}t$, $s^{\rm DL}=\frac{m\tau}{p \eta x_0^{-\alpha}}$
and (a) is from the definition $\frac{\Gamma_u\left ( m,g \right )}{\Gamma\left ( m \right )} =\exp(-g) {\textstyle \sum_{k=0}^{m-1}}\frac{g^k}{k!}$.}
Taking (\ref{eq:covout2}) into (\ref{eq:covout1}), we obtain the expression of $\mathcal{P}^{\rm DL}_{\rm cov}$ for the typical user outside the hole. 
The proof of $\mathcal{P}^{\rm DL}_{\rm cov}$ for the typical user inside the hole is similar to the above approach, which therefore is omitted here.

\section{Proof of Theorem~\ref{theor:emfDL}}\label{app:theorem2}
In order to find the \ac{CDF} of \ac{EMF} exposure in the downlink, we first derive the Laplace transform of $W^{\rm DL}$, which is given by
\begin{equation}\label{eq:LemfDL0}
\begin{split}
\mathcal{L} _{W^{\rm DL}}(s)
&=\mathbb{E}_{W^{\rm DL}}\left [  \exp(-s W^{\rm DL}) \right ] 
=\mathbb{E}_{W^{\rm DL}}\left [ \exp\left (-s \sum_{i,b_i\in \Psi_B}\frac{p H_i}{4\pi x_i^{\beta}} \right )  \right ] 
\\&=\mathbb{E}_{\Psi_B}\left [ \prod_{i,b_i\in \Psi_B}\mathbb{E}_{H}\left [ \exp\left ( -s\frac{pH}{4\pi x_i^{\beta}} \right ) \right ]    \right ] 
\overset{(a)}{=} \textcolor{black}{\mathbb{E}_{\Psi_B}\left [ \prod_{i,b_i\in \Psi_B} \kappa^{\rm DL}(x_i,s)  \right ],}
\end{split}
\end{equation}
where $\kappa^{\rm DL}(x_i,s)=\left(\frac{m}{m+sp(4\pi)^{-1}x_i^{-\beta}}\right )^m$ and (a) is from 
the distribution of small-scale fading $H$ in \eqref{eq:ssfading}.
Focusing on the typical user outside the hole and employing the probability generating functional (PGFL) of \ac{PPP} in~\cite{andrews2016primer}, (\ref{eq:LemfDL0}) can be further expressed as
\begin{equation}\label{eq:LemfDL1}
\begin{split}
\mathcal{L} _{W^{\rm DL}}(s)
&=\textcolor{black}{\exp\left ( -2\pi\lambda_B\int_{0}^{\infty} \left [ 1- \kappa^{\rm DL}(x,s)  \right ]  x\mathrm{d}x \right )},
\end{split}
\end{equation}
where $\lambda_{B}={\lambda_{b}}\exp(-\lambda_{r}R^2)$.
While for the user inside the hole, $x_i\ge R$ and thus the Laplace transform of $W^{\rm DL}$ is 
\begin{equation}\label{eq:LemfDL2}
\begin{split}
\mathcal{L} _{W^{\rm DL}}(s)
&=\textcolor{black}{\exp\left ( -2\pi\lambda_B\int_{R}^{\infty}\left [ 1-\kappa^{\rm DL}(x,s)  \right ]  x\mathrm{d}x \right )}.
\end{split}
\end{equation}
From Gil-Pelaez theorem, the \ac{CDF} of $W^{\rm DL}$ can be written as
\begin{align} \label{eq:CDF}
    F_{W^{\rm DL}}(w)&= \frac{1}{2}- \frac{1}{\pi} \int_{0}^{\infty} \frac{1}{t} \, \operatorname{Im} \left( \exp(-{j} \, t w) \phi_{W^{\rm DL}}(t)               \right) \mathrm{d}t \nonumber \\
  &= \frac{1}{2}- \frac{1}{\pi} \int_{0}^{\infty} \frac{1}{t} \, \operatorname{Im} \left( \exp(-{j}\, t w) \mathcal{L}_{W^{\rm DL}}(-{j}\, t)               \right) \mathrm{d}t \nonumber \\
  &= \frac{1}{2}- \frac{1}{2 {j} \,\pi} \int_{0}^{\infty} \frac{1}{t} \, \left[ e^{-{j}\, t w} \mathcal{L}_{W^{\rm DL}}(-{j}\, t) -   e^{{j}\, t w} \mathcal{L}_{W^{\rm DL}}({j}\, t)           \right] \mathrm{d}t,
\end{align}
where
$  \phi_{W^{\rm DL}}(t) =   \mathbb{E} \left\{\exp{\left( {j}\, t \, w \right)}  \right\}
  = \mathcal{L}_{W^{\rm DL}}(-{j} \,t) $.
Submitting (\ref{eq:LemfDL1}) or (\ref{eq:LemfDL2}) into (\ref{eq:CDF}), we prove Theorem~\ref{theor:emfDL}.

\section{Proof of Theorem~\ref{theor:emfx0}}\label{app:theorem3}
Under the condition that the serving \ac{BS} is located at $b_0$ with distance $x_0$ to the typical user, the power density received at the typical user is from the closest \ac{BS} with a distance of $x_0$ and from the rest of \acp{BS} with distances greater than $x_0$ to the typical user,
which is given by
\begin{equation}\label{eq:emfDLx0}
W^{\rm DL}(x_0)=\frac{pH_0}{4\pi x_0^{\beta}}+\sum_{i,b_i\in \Psi_B \setminus \{b_0\}}\frac{pH_i}{4\pi x_i^{\beta}}.
\end{equation}
The Laplace transform of $W^{\rm DL}(x_0)$ is given by
%
\begin{equation}\label{eq:LemfDLx0}
\begin{split}
\mathcal{L} _{W^{\rm DL}|x_0}(s)
&=\mathbb{E}_{W^{\rm DL}|{x_0}}\left [ \exp\left (-s \frac{pH_0}{4\pi x_0^{\beta}}-s\sum_{i,b_i\in \Psi_B \setminus \{b_0\}}\frac{pH_i}{4\pi x_i^{\beta}} \right )  \right ] 
\\&=\mathbb{E}_{H_0}\left [ \exp(-s\frac{pH_0}{4\pi x_0^{\beta}}) \right ] 
\mathbb{E}_{\Psi_B,\left \{ H_i \right \}}\left [ \prod_{i,b_i\in \Psi_B \setminus \{b_0\}} \exp\left ( -s\frac{pH_i}{4\pi x_i^{\beta}} \right )   \right ] 
\\&\overset{(a)}{=}\textcolor{black}{\kappa^{\rm DL}(x_0,s)\mathbb{E}_{\Psi_B}\left [ \prod_{i,b_i\in \Psi_B \setminus \{b_0\}}\kappa^{\rm DL}(x_i,s)  \right ]},
\end{split}
\end{equation}
where $\kappa^{\rm DL}(\cdot)$ is given in \eqref{eq:LemfDL} and (a) is from the same method in \eqref{eq:LemfDL0}.
Applying the PGFL of \ac{PPP}~\cite{andrews2016primer} into (\ref{eq:LemfDLx0}), we have
\begin{equation}\label{eq:LemfDLx1}
\begin{split}
\mathcal{L} _{W^{\rm DL}|x_0}(s)
&{=}\textcolor{black}{\kappa^{\rm DL}(x_0,s)\exp\left ( -2\pi\lambda_B\int_{x_0}^{\infty}\left [ 1-\kappa^{\rm DL}(x,s)  \right ]  x\,\mathrm{d}x \right )},
\end{split}
\end{equation}
where $\lambda_{B}={\lambda_{b}} \exp(-\lambda_{r}R^2)$.
Following the result of (\ref{eq:CDF}), we complete the proof of Theorem~\ref{theor:emfx0}
by replacing $\mathcal{L} _{W^{\rm DL}}(s)$ with $\mathcal{L} _{W^{\rm DL}|x_0}(s)$.
\ifCLASSOPTIONcaptionsoff
  \newpage
\fi



\bibliographystyle{IEEEtran}
\bibliography{emf_ref}
%

%







\end{document}